\documentclass[twocolumn,review]{autart_arxiv}

\usepackage{graphicx}
\usepackage{amsmath}
\usepackage{amssymb}
\usepackage{algorithm,algorithmic}
\usepackage[center]{subfigure}
\usepackage{euscript}
\usepackage{verbatim}
\usepackage{amssymb}
\usepackage{cite}
\usepackage{color}
\usepackage{cancel}
\usepackage{calligra}
\usepackage{pgf}
\usepackage{tikz}
\usepackage{ctable}
\usetikzlibrary{arrows,automata}
\usetikzlibrary{shapes}
\usetikzlibrary{positioning}
\usepackage{bbm}
\usepackage{multirow}
\usepackage{url}

\newtheorem{definition}{\textbf{Definition}}{\normalfont}{\normalfont}
{\normalfont}{\normalfont}
\newtheorem{remark}{Remark}{\normalfont}{\normalfont}
\newtheorem{theorem}{Theorem}

\newtheorem{lemma}{Lemma}
\newenvironment{proof}{\emph{Proof:}}{\hfill$\square$}

\renewcommand{\figurename}{Figure}
\renewcommand{\tablename}{Table}

\usepackage{enumitem}
\renewcommand{\theenumi}{\arabic{enumi}}

\renewcommand{\theenumii}{\arabic{enumii}}

\begin{document}

	\begin{frontmatter}
		\runtitle{Learning explicit predictive controllers}
		\title{Learning explicit predictive controllers:\\theory and applications\thanksref{funding}
		}
		
		\thanks[funding]{This project was partially supported by the Italian Ministry of University and Research under the PRIN'17 project \textquotedblleft Data-driven learning of constrained control systems", contract no. 2017J89ARP.}
		
		\author[Polimi]{Andrea Sassella}\ead{andrea.sassella@polimi.it},
		\author[Polimi]{Valentina Breschi}\ead{valentina.breschi@polimi.it},
		\author[Polimi]{Simone Formentin}\ead{simone.formentin@polimi.it}
		
		\address[Polimi]{Dipartimento di Elettronica, Informazione e Bioingegneria, Politecnico di Milano, Piazza L. da Vinci 32, 20133 Milano, Italy.}
		
		\begin{keyword}    
			Data-driven control; learning-based control, predictive control, explicit MPC                       
		\end{keyword}

		\begin{abstract}
			In this paper, we deal with data-driven predictive control of linear time-invariant (LTI) systems. Specifically, we show for the first time how \textit{explicit} predictive laws can be learnt directly from data, without needing to identify the system to control. To this aim, we resort to the Willems' fundamental lemma and we derive the explicit formulas by suitably elaborating the constrained optimization problem under investigation. The resulting optimal controller turns out to be a piecewise affine system coinciding with the solution of the original model-based problem in case of noiseless data. Such an equivalence is proven to hold asymptotically also in presence of measurement noise, thus making the proposed method a computationally efficient (but model-free) alternative to the state of the art predictive controls. The above statements are further supported by numerical simulations on three benchmark examples.
		\end{abstract}
		
	\end{frontmatter}
	
	\section{Introduction} 
	Model predictive control (MPC) is a widely recognized and diffused technology for the solution of advanced constrained control (see \emph{e.g.,} \cite{Rawlings2000,Hrovat2012}), yet still subject of active research \cite{Faulwasser2021,Matne2014}. MPC relies on $(i)$ the formulation of a finite-horizon open-loop optimal control problem, that is iteratively solved in a \emph{receding-horizon} fashion to determine the optimal control move at each time instant, and on $(ii)$ the availability of a model describing the dynamics of the system under control. 
	
	Nonetheless, solving a mathematical problem in real-time might be infeasible or not advisable for certain applications, mainly due to the cost of the computational equipment needed to retrieve the optimal action within the sampling time or because of software certification concerns, that might arise in safety critical applications \cite{Alessio2009}. To cope with these limitations, over the years several techniques have been proposed to improve the efficiency of MPC solvers \cite{Patrinos2014} or to provide complexity certifications of the latter \cite{Cimini2017,Cimini2021}. As an alternative to these solutions, \cite{Berberich2020b} propose to move the design effort off-line, by exploiting the structure of the model predictive control problem to derive an \emph{explicit} MPC law. Although this seminal work focused on quadratic costs and linear dynamic models, since then explicit MPC has been extended to several classes of objectives, \emph{e.g.,} linear costs \cite{Bemporad2002}, and to more complex models (see \cite{Alessio2009} for a complete overview). Notably, when exploiting explicit MPC, the computation of the control action entails a function evaluation, thus not requiring sophisticated and potentially demanding optimization procedures. Meanwhile, the complexity of the explicit control law can become unmanageable for large scale problems or when long prediction horizons are considered, thus making this approach mainly suited for relatively small control problems.
	
	\begin{table*}[!tb]
		\caption{Requirements of E-DDPC as compared to implicit data-driven predictive control (DDPC), implicit and explicit MPC. The crosses (\texttt{x}) indicate the elements needed for the design and deployment of the controller.}\label{tab:comparison}
		\centering
		\begin{tabular}{ccccc}
			\multicolumn{1}{c}{} \hspace*{-.3cm}&\hspace*{-.3cm} Implicit MPC \hspace*{-.15cm}&\hspace*{-.15cm} Explicit MPC \hspace*{-.145cm}&\hspace*{-.145cm} DDPC \hspace*{-.15cm}&\hspace*{-.15cm} E-DDPC\\ 
			\hline
			Dataset &\texttt{x} & \texttt{x} & \texttt{x} &\texttt{x}\\
			\hline 
			Identification & \texttt{x} & \texttt{x} & - & -\\
			\hline 
			Online solver & \texttt{x} & - & \texttt{x} & -\\
			\hline 
		\end{tabular}
	\end{table*}
	
	The need for a model of the controlled system can also be a shortcoming of standard MPC, especially for those applications in which a mathematical model for the plant is not available and it has to be retrieved from data. To explain the dynamics of a system, several identification techniques have been proposed over the years (see \cite{Ljung1999} for an overview on classical identification techniques) to learn dynamical models from data and learning-based predictive control methods have been devised to handle possible inaccuracies in these data-driven models, \emph{e.g.,} \cite{ASWANI2013}. Nonetheless, identification procedures are known to be generally expensive and time consuming. Meanwhile, identification techniques usually aim at achieving the maximum model accuracy, often at the price of overly complex model structures, while seldom accounting for the application these models are learned for \cite{GEVERS2005}. Indeed, in some cases the added complexity introduced by the identified model is unnecessary to achieve a given control goal. To overcome these limitations, different data-driven control approaches have been proposed to skip the identification phase and to directly learn the controller from data, ranging from model-reference methods \cite{Hjalmarsson2002,Breschi2021,formentin2014comparison,formentin2019deterministic} to approaches for the design of linear quadratic regulators \cite{DePersis2021} and state feedback controllers \cite{Berberich2020,Rotulo2021online}. Along this research line, several data-based predictive strategies have been recently proposed, which ground on results in behavioral theory to formulate purely data-driven control predictive problems \cite{coulson2019data,Berberich2021}. These foundational approaches have then been extended to handle tracking problems \cite{Berberich2020b}, to deal with nonlinear systems \cite{Berberich2021b}, and to exploit regularization to improve the performance of the final controller \cite{Dorfler2021,Coulson2019}.
	
	Along this line, in this paper we combine for the first time the benefits of explicit MPC and the ones of purely data-driven methods into a fully \emph{explicit data-driven predictive control} (E-DDPC) approach. The proposed procedure leads to the definition of a purely data-based piecewise affine (PWA) control law, which is defined so as to guarantee constraint satisfaction and the optimization of a quadratic performance-oriented cost. A key condition for its derivation is the persistency of excitation of the input signal, that allows us to exploit Willems et al.'s lemma \cite{Willems2005} to translate the explicit MPC solution into its data-driven counterpart. The proposed derivation also allows us to prove the equivalence between the model-based and the data-based solutions in case of noiseless measurements, while we present a noise handling strategy to obtain asymptotic equivalence in presence of measurement noise. 
	As summarized in \tablename{~\ref{tab:comparison}}, the presented explicit predictive controller only relies on a single dataset to be designed, whereas complex identification procedures or online solvers are no longer required.  
	
	The paper is organized as follows. In Section~\ref{sec:problem}, we lay out the main assumptions on the data and we introduce the control problem of interest. The standard model-based predictive control problem and the main steps leading to the formulation of its explicit solution are summarized in Section~\ref{sec:fromItoE}. Section~\ref{sec:E-DDPC} introduces the main theoretical results, leading to the definition of the data-driven explicit predictive controller. The implementation of the data-based control law is then illustrated in Section~\ref{sec:practice}, along with the proposed practical procedure to handle noisy data. Section~\ref{sec:examples} shows the results obtained using the data-driven explicit predictive controller on three simulation case studies, \emph{i.e.,} the stabilization of the benchmark system considered in \cite{Bemporad2002b}, the regulation to zero of a sparse system and the altitude control of a quadrotor. The paper is ended by some concluding remarks.
	
	\subsection*{Notation} 
	Let $\mathbb{N}_{0}$ and $\mathbb{R}$ be the set of natural numbers, including zero, and the set of real numbers, respectively. Denote with $\mathbb{R}^{n}$ and $\mathbb{R}^{n \times m}$ the set of real column vector of dimension $n$ and the set of real matrices of dimension $n \times m$, respectively. Given a rectangular matrix $B \in \mathbb{R}^{m \times n}$, $B' \in \mathbb{R}^{n \times m}$ denotes its transpose and $B^{\dagger}$ indicates its right inverse. Given a squared matrix $A \in \mathbb{R}^{n\times n}$, we define its inverse as $A^{-1}$. We denote with $I_{n}$ the identity matrix of dimension $n$ and with $\mathbf{0}_{n \times m}$ a zero matrix of dimension $n \times m$. If $Q \succ 0$ ($Q \succeq 0$), then the matrix $Q \in \mathbb{R}^{n \times n}$ is  positive definite (semi-positive definite). Let $x \in \mathbb{R}^{n}$, the quadratic form $x'Qx$ is compactly denoted as $\|x\|_{Q}^{2}$.
	
	\section{Problem formulation}\label{sec:problem}
	Consider a \emph{linear time-invariant} (LTI) system, whose dynamics is described by the following \emph{unknown} state-space model:
	\begin{subequations}\label{eq:system}
		\begin{equation}
			\mathcal{S}: ~~\begin{cases}
				x(t+1)=Ax(t)+Bu(t),\\
				y(t)=Cx(t)+Du(t),
			\end{cases}
		\end{equation}
		where $x(t) \in \mathbb{R}^{n}$ is the state of the system at time $t \in \mathbb{N}_{0}$, $u(t) \in \mathbb{R}^{m}$ is an exogenous input and $y \in \mathbb{R}^{p}$ is the corresponding \emph{noiseless} output. Let us assume that the unknown system $\mathcal{S}$ is controllable and that its state is \emph{fully measurable}, namely 
		\begin{equation}
			y(t)=x(t),~~\forall t \in \mathbb{N}_{0},
		\end{equation}	
		\emph{i.e.,} 
		$C=I_{n}$ and $D=\mathbf{0}_{n \times m}$. 
	\end{subequations}
	
	Assume that we can excite the system with an input sequence $\mathcal{U}_{T}=\{u(t)\}_{t=0}^{T}$, that is \emph{persistently exciting} of order $n+1$ according to the following definition.
	\begin{definition}[Persistently exciting input \cite{Willems2005}] \label{def:persistentlyexciting}
		The input sequence $\mathcal{U}_{T}$ is said to be persistently exciting of order $\tau$ if the matrix
		\begin{equation} U_{0,\tau,T} = \begin{bmatrix}u(0) & u(1) & \cdots & u(T-\tau)\\
				u(1) & u(2) & \cdots & u(T-\tau+1)\\
				\vdots & \vdots & \cdots & \vdots \\
				u(\tau-1) & u(\tau) & \cdots & u(T-1)
			\end{bmatrix} \in \mathbb{R}^{m\tau \times T} \end{equation}
		is full row rank, namely $\mathrm{rank}(U_{0,\tau,T})=m\tau$.
	\end{definition}
	Suppose that we can only measure the corresponding \emph{noisy} output sequence $\mathcal{Y}_{T}^{n}=\{y^{n}(t)\}_{t=0}^{T}$, with
	\begin{equation}\label{eq:noisy_y}
		y^{n}(t)=y(t)+v(t),
	\end{equation} 
	where $v$ is a zero-mean white noise with covariance matrix $\Sigma \in \mathbb{R}^{n \times n}$. 
	
	In this work, our goal is to exploit the available data $\mathcal{D}_{T}^{n}=\{\mathcal{U}_{T},\mathcal{Y}_{T}^{n}\}$ and their features to \emph{directly} and \emph{explicitly} solve standard predictive control problems without first identifying a model of $\mathcal{S}$. More specifically, let $N_{x},N_{c},N_{u}$, respectively denote the state, input and constraint horizons, with $N_{u} \leq N_{x}$. Our objective is to solve the following model-based constrained optimal control problem:
	\begin{subequations}\label{eq:MPC}
		\begin{align} 
			\min_{\{u(k)\}_{k=0}^{N_{u}-1}} &~~ \|x(N_{x})\|_{P}^{2}+\!\!\!\sum_{k=0}^{N_{x}-1}\left[\|x(k)\|_{Q}^{2}\!+\!\|u(k)\|_{R}^{2}\right]\label{eq:MPCcost}\\
			\mbox{s.t.} &~x(k+1)\!=\!Ax(k)+Bu(k),~k\geq 0, \label{eq:MPCconstr1}\\
			&~x(0)=x, \label{eq:MPCconstr2}\\
			&~\mathcal{C}_{x}x(k)\!+\!\mathcal{C}_{u}u(k) \leq d,~ k\!=\!0,\ldots,N_{c}\!-\!1, \label{eq:MPCconstr3}\\
			&~u(k)=Kx(k),~N_{u} \leq k < N_{x}, \label{eq:MPCconstr4}
		\end{align}
		which aims at finding the optimal sequence of inputs $\{u(k)\}_{k=0}^{N_{u}}$ steering the state of the system to zero from the initial condition in \eqref{eq:MPCconstr3}, while satisfying the \emph{convex} constraints in \eqref{eq:MPCconstr3}, without identifying a model for the system under control. Note that, at each time instant $t$, the initialization in \eqref{eq:MPCconstr2} relies on the latest available information on the state, \emph{i.e.,} its measurement ($x=x(t)$) or estimate ($x=\hat{x}(t)$). In this formulation, optimality is thus dictated by: $(i)$ the distance of the predicted state from the origin, weighted by $Q \succeq 0$; $(ii)$ the control effort, penalized with $R \succ 0$, and $(iii)$ a terminal cost, with associated penalty $P \succeq 0$. Moreover, whenever the state and input horizon are different, the constraint in \eqref{eq:MPCconstr4} entails that some precomputed feedback gain $K \in \mathcal{R}^{m \times n}$ is used to generate the input for $k=N_{u},\ldots,N_{x}-1$.   
	\end{subequations}
	
	%%%%%%%%%%%%%%%%%%%%%%%%%%%%%%%%%%%%%%%%%%%%%%%%%%%%%%%%%%%%%%%%%%%%%%%%%%%%%%
	\section{From implicit to explicit MPC: an overview}\label{sec:fromItoE}
	In this section, we summarize the main steps required to shift from the standard implicit MPC formulation to the explicit solution of a predictive control problem within a model-based setting. A similar procedure will be key in the subsequent derivation of the fully \emph{data-driven} explicit predictive control solution.   
	
	To start with, we notice that, by rewriting the prediction model in \eqref{eq:MPCconstr1} as 
	\begin{equation}
		x(k)=A^{k}x+\sum_{j=0}^{k-1}A^{j}Bu(k-j-1), ~~ k\geq 0,
	\end{equation}
	and accounting for the fact that $u(k)$ is dictated by the feedback law in \eqref{eq:MPCconstr4} whenever $k \geq N_{u}$, the control problem in \eqref{eq:MPC} can be recast as the following convex \emph{multi-parametric quadratic program} (mp-QP):
	\begin{subequations}\label{eq:QP}
		\begin{align} \underset{U}{\text{min}}\quad&U'HU+2x'FU, \label{eq:QP_cost}\\ 
			\quad \text{s.t.}\quad & GU\leq W+Ex, \label{eq:QP_constr} 
		\end{align}
	\end{subequations}
	where $U \in \mathbb{R}^{N_{u}m}$ stacks the sequence of control actions to be optimized, and the matrices $H$, $F$, $G$, $W$ and $E$ are functions of the penalties $Q$, $R$, $P$, the precomputed feedback gain $K$ and the matrices characterizing the evolution of the system in \eqref{eq:system}. By completing the squares, the mp-QP in \eqref{eq:QP} can be equivalently reformulated as
	\begin{subequations} \label{eq:QP2}
		\begin{align} \underset{z}{\text{min}}\quad&z'Hz, \label{eq:QP2_cost}\\ 
			\quad \text{s.t.}\quad & Gz\leq W+Sx, \label{eq:QP2_constr}
		\end{align}
	\end{subequations}
	where $H \succ 0$, $S\triangleq E+GH^{-1}F'$ and the new optimization variable is
	\begin{equation}\label{eq:Zexpression}
		z \triangleq U+H^{-1}F'x\in\mathbb{R}^{N_{u}m}.
	\end{equation}
	
	Since $H$ in \eqref{eq:QP2} is positive definite, the solution of the mp-QP in \eqref{eq:QP2} is unique and it can be retrieved in closed-form from the Karush-Kuhn-Tucker (KKT) optimality conditions. This procedure results into the following optimal control sequence: 
	\begin{subequations}
		\begin{equation} \label{eq:PWAseq}
			U(x)= \begin{cases}
				\mathcal{F}_{1}x+\mathcal{G}_{1}, \mbox{ if }~~ \mathcal{H}_{1}x\leq \mathcal{K}_{1}, \\
				\vdots \\
				\mathcal{F}_{M}x+\mathcal{G}_{M}, \mbox{ if }~~ \mathcal{H}_{M}x\leq \mathcal{K}_{M},
			\end{cases}
		\end{equation}
		where $M$ denotes the number of polyhedral regions over which the sequence is defined, and  
		\begin{align} \label{eq:PWAdetails}
			& \mathcal{F}_{i}=H^{-1}\tilde{G}_{i}'(\tilde{G}_{i}H^{-1}\tilde{G}_{i}')^{-1}\tilde{S}_{i}-H^{-1}F',\\
			& \mathcal{G}_{i} = H^{-1}\tilde{G}_{i}'(\tilde{G}_{i}H^{-1}\tilde{G}_{i}')^{-1}\tilde{W}_{i},\\
			& \mathcal{H}_{i} = 
			\begin{bmatrix}
				& (\tilde{G}_{i}H^{-1}\tilde{G}_{i}')^{-1}\tilde{S}_{i} \\
				& G H^{-1}\tilde{G}_{i}'(\tilde{G}_{i}H^{-1}\tilde{G}_{i}')^{-1}\tilde{S}_{i}-S
			\end{bmatrix}\!, \\
			& \mathcal{K}_{i}=
			\begin{bmatrix}
				& -(\tilde{G}_{i}H^{-1}\tilde{G}_{i}')^{-1}\tilde{W}_{i}\\
				& -GH^{-1}\tilde{G}_{i}'(\tilde{G}_{i}H^{-1}\tilde{G}_{i}')^{-1}\tilde{W}_{i}+W
			\end{bmatrix}\!,
		\end{align}
		with $\tilde{G}_{i},\tilde{W}_{i},\tilde{S}_{i}$ comprising the rows of $G,W,S$ in \eqref{eq:QP2_constr} associated with the $i$-th set of active constraints, for $i=1,\ldots,M$.
	\end{subequations}
	As in the case of implicit MPC, only the first control action is applied to the system, while the other are discarded. Given the input sequence of the form \eqref{eq:PWAseq}, the control action thus results into the \emph{piecewise affine} (PWA) law 
	\begin{equation}\label{eq:explicitMPC}
		u(x)=\begin{cases}
			F_{1}x+g_{1}, \mbox{ if }~~ \mathcal{H}_{1}x \leq \mathcal{K}_{1},\\
			\vdots\\
			F_{M}x+g_{M}, \mbox{ if }~~ \mathcal{H}_{M}x \leq \mathcal{K}_{M},
		\end{cases}
	\end{equation} 
	which can be evaluated at each time instant $t$ by replacing $x$ with the most recent information available on the system state.
	
	\section{Explicit data-driven predictive control}\label{sec:E-DDPC}
	Following the rationale of the previous section, we present here the main results of this paper, leading to a direct translation of the explicit predictive controller in \eqref{eq:explicitMPC} into its \emph{data-driven} counterpart. To this end, we initially recall some results in \cite{DePersis2019} that are instrumental for achieving our goal and, then, exploit these result to obtain a \emph{fully data-driven} explicit predictive controller. Throughout this section, we assume that the noiseless state is measurable, so that the dataset $\mathcal{D}_{T}$ available for the design of the predictive controller comprises a set of noiseless output, \emph{i.e., } $\mathcal{D}_{T}=\{u(t),y(t)\}_{t=0}^{T}$.
	
	\subsection{On the data-based representation of the system} 
	A stepping stone towards the achievement of our goal is given in the following result.
	\begin{lemma}[Fundamental lemma]\label{lemma:1}
		Let the input sequence $\mathcal{U}_{T}$ be persistently exciting of order $n+1$ according to Definition~\ref{def:persistentlyexciting}. Assume that the noiseless dataset $\mathcal{D}_{T}$ is sufficiently long, namely $T \geq (m+1)n+m$. Then, it holds that:
		\begin{equation} \label{eq:persistenlyexciting}
			\mbox{rank} \begin{bmatrix}U_{0,1,T} \\ \hline X_{0,T} \end{bmatrix} = n+m, 
		\end{equation}
		where $
		X_{0,T} = 
		\begin{bmatrix}
			x(0)&\ldots & x(T-1)
		\end{bmatrix}$.
		\hfill$\blacksquare$
	\end{lemma}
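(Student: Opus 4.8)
The plan is to observe that the claimed rank identity is precisely the Fundamental Lemma of Willems \emph{et al.} \cite{Willems2005} specialised to a prediction window of length one, and to organise the proof around the left null space of the stacked matrix $\smallmat{U_{0,1,T}\\ X_{0,T}}$. The $\leq$ direction is immediate: the matrix has $n+m$ rows, and the length hypothesis $T\geq(m+1)n+m$ gives $T\geq n+m$, so $\mathrm{rank}\leq n+m$ (equivalently, each column is a length-one input--state trajectory of $\mathcal{S}$, and these live in $\mathbb{R}^{m}\times\mathbb{R}^{n}$). All the content is therefore in showing that the matrix has \emph{full} row rank $n+m$.

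For that, I would argue by contradiction, assuming there is $\smallmat{\eta\\ \xi}\neq 0$ with $\eta'u(k)+\xi'x(k)=0$ for $k=0,\dots,T-1$. First dispatch the case $\xi=0$: then $\eta\neq 0$ and $\eta'U_{0,1,T}=0$, which is impossible because persistency of excitation of order $n+1$ forces $U_{0,n+1,T}$ to be full row rank, hence so is its first block row and, a fortiori, $U_{0,1,T}$; note this is also where the bound $T\geq(m+1)n+m$ enters, being exactly the column/row compatibility needed for $U_{0,n+1,T}$ to attain rank $m(n+1)$. So from now on $\xi\neq 0$. Next, propagate the annihilating relation through the dynamics $x(k+1)=Ax(k)+Bu(k)$: an easy induction shows that $\xi'A^{j}x(k)$ equals a fixed linear combination of $u(k),\dots,u(k+j)$ with coefficients built from $\xi,\eta,A,B$. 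Plugging the Cayley--Hamilton identity $A^{n}=-\sum_{i=0}^{n-1}a_iA^{i}$ into $\xi'A^{n}x(k)$ then collapses everything into a single relation $\sum_{i=0}^{n}c_i\,u(k+i)=0$, valid for $k=0,\dots,T-1-n$, i.e.\ $[\,c_0\ \cdots\ c_n\,]\,U_{0,n+1,T}=0$, whose last coefficient turns out to be $c_n=-\eta'$.

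If $\eta\neq 0$ this already contradicts $U_{0,n+1,T}$ being full row rank. The remaining sub-case $\eta=0$, $\xi\neq 0$ (so $\xi'X_{0,T}=0$) is the one I expect to need the most care, and it is where controllability of $\mathcal{S}$ is genuinely invoked: unwinding the coefficients $c_i$ when $\eta=0$ shows, recursively, that $[\,c_0\ \cdots\ c_n\,]=0$ would force $\xi'A^{j}B=0$ for $j=0,\dots,n-1$, i.e.\ $\xi'[\,B\ AB\ \cdots\ A^{n-1}B\,]=0$, contradicting controllability. This closes the argument. The only genuinely delicate points are this degenerate sub-case and keeping the column indexing of the several Hankel matrices consistent; the rest is bookkeeping, and in fact the entire full-row-rank step can alternatively be cited verbatim from Corollary~2 of \cite{Willems2005} or from \cite{DePersis2019}.
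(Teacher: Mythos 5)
Your argument is correct, but it is considerably more than what the paper itself does: the paper offers no proof of Lemma~\ref{lemma:1} at all, simply remarking that the statement is a direct consequence of Corollary~2 in \cite{Willems2005} (the fundamental lemma with prediction depth $L=1$). Your reconstruction is the standard self-contained proof of that corollary, and all its pieces check out. The upper bound $\mathrm{rank}\leq n+m$ is trivial from the row count; for the lower bound, your left-null-vector dichotomy is sound: if $\xi=0$, full row rank of $U_{0,n+1,T}$ kills $\eta$ via its first block row; if $\xi\neq 0$, the induction $\xi'A^{j}x(k)=-\eta'u(k+j)-\sum_{i=0}^{j-1}\xi'A^{j-1-i}Bu(k+i)$ combined with Cayley--Hamilton produces a left annihilator $[\,c_0\ \cdots\ c_n\,]$ of $U_{0,n+1,T}$ with $c_n=-\eta'$, so persistency of excitation forces $\eta=0$ and then, peeling off $c_{n-1},c_{n-2},\dots$, forces $\xi'A^{j}B=0$ for $j=0,\dots,n-1$, contradicting the controllability assumed in \eqref{eq:system}. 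Your observation that $T\geq(m+1)n+m$ is exactly the condition $T-n\geq m(n+1)$ making full row rank of $U_{0,n+1,T}$ possible is also accurate and is a point the paper leaves implicit. What the paper's citation-only route buys is brevity and modularity; what your route buys is transparency about precisely where controllability and the excitation order $n+1$ (rather than $n$) are used. Since this is a known result, either route is acceptable; if you write yours out in full, the only places demanding care are the ones you already flagged, namely the degenerate sub-case $\eta=0$ and the column indexing of the Hankel matrices.
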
  
	This result is a direct consequence of \cite[Corollary 2]{Willems2005}, which holds in the noiseless case whenever the design of experiment guiding the data collection phase has been properly performed. 
	
	Assuming that the available data satisfies the assumptions of Lemma~\ref{lemma:1}, the condition in \eqref{eq:persistenlyexciting} is key to the data-driven characterization of the open-loop behavior of system~\eqref{eq:system}. This data-based representation is given in the next theorem, taken from \cite{DePersis2019}.
	\begin{theorem}[Data-driven system representation] \label{th:1}
		Let condition \eqref{eq:persistenlyexciting} hold. Then, \eqref{eq:system} can be equivalently represented as
		\begin{equation}  
			x(t+1) = X_{1,T} \begin{bmatrix}U_{0,1,T} \\ \hline X_{0, T} \end{bmatrix} ^\dagger \begin{bmatrix}u(t)\\ x(t) \end{bmatrix} \label{eq:DD_rep} 
		\end{equation} 
		where $X_{1,T}=\begin{bmatrix}x_{d}(1) & {x_{d}}(2) & \ldots & {x_{d}}(T) \end{bmatrix}$. \hfill$\blacksquare$
	\end{theorem}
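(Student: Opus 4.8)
The plan is to rely on the simple observation that the recorded data are themselves a trajectory of \eqref{eq:system}, and then to invert the relevant data matrix using the rank condition \eqref{eq:persistenlyexciting}. First I would write the one-step dynamics along the whole dataset: for every $k\in\{0,\ldots,T-1\}$ the recorded triple $(u(k),x(k),x(k+1))$ satisfies $x(k+1)=Ax(k)+Bu(k)$, so stacking these $T$ identities columnwise yields the matrix equation
\begin{equation}\label{eq:plan_stack}
	X_{1,T} = A\,X_{0,T} + B\,U_{0,1,T} = \begin{bmatrix} B & A \end{bmatrix}\begin{bmatrix} U_{0,1,T} \\ X_{0,T} \end{bmatrix}.
\end{equation}
It is convenient to abbreviate $\Phi \triangleq \smallmat{U_{0,1,T}\\ X_{0,T}} \in \mathbb{R}^{(n+m)\times T}$, so that \eqref{eq:plan_stack} reads $X_{1,T}=\begin{bmatrix} B & A\end{bmatrix}\Phi$.

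Next, by \eqref{eq:persistenlyexciting} the matrix $\Phi$ has full row rank $n+m$, hence it admits a right inverse $\Phi^{\dagger}$ with $\Phi\Phi^{\dagger}=I_{n+m}$. Post-multiplying \eqref{eq:plan_stack} by $\Phi^{\dagger}$ gives $X_{1,T}\Phi^{\dagger} = \begin{bmatrix} B & A\end{bmatrix}\Phi\Phi^{\dagger} = \begin{bmatrix} B & A\end{bmatrix}$, i.e., the data-built matrix $X_{1,T}\Phi^{\dagger}$ recovers exactly the (unknown) system matrices. Substituting this into the state update written for an arbitrary pair $(u(t),x(t))$ then produces precisely \eqref{eq:DD_rep}; conversely, since \eqref{eq:DD_rep} is driven by the very same one-step map $\begin{bmatrix} B & A\end{bmatrix}$, every trajectory it generates solves \eqref{eq:system}, which establishes the claimed equivalence in both directions.

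The only point that needs care — and the main, if mild, obstacle — is that $\Phi^{\dagger}$ is not unique whenever $T>n+m$, so one must verify that \eqref{eq:DD_rep} is well posed, i.e., independent of the particular right inverse chosen. This is immediate from the computation above: for \emph{every} right inverse $\Phi^{\dagger}$ one has $X_{1,T}\Phi^{\dagger}=\begin{bmatrix} B & A\end{bmatrix}$, because $X_{1,T}=\begin{bmatrix} B & A\end{bmatrix}\Phi$ and $\Phi\Phi^{\dagger}=I_{n+m}$; equivalently, any two right inverses differ by a matrix whose columns lie in $\ker\Phi$, which is annihilated by $X_{1,T}$. Hence the product $X_{1,T}\Phi^{\dagger}$ is invariant, and no persistency-of-excitation argument beyond what is already encoded in \eqref{eq:persistenlyexciting} (equivalently, in Lemma~\ref{lemma:1}) is required.
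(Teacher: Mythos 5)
Your proof is correct and is precisely the standard argument behind this result: the paper does not prove Theorem~\ref{th:1} itself but imports it from \cite{DePersis2019}, where the proof is exactly your computation --- stack the one-step dynamics into $X_{1,T}=[B~~A]\,\Phi$, use the rank condition \eqref{eq:persistenlyexciting} to right-invert $\Phi$, and conclude $X_{1,T}\Phi^{\dagger}=[B~~A]$. Your additional check that the product is independent of the choice of right inverse is a worthwhile (and correct) point of rigor that the cited treatment also relies on implicitly.
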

	
	\subsection{Towards E-DDPC}
	To retrieve the data-driven explicit controller, we first compute the data-based counterpart of \eqref{eq:QP2}. This can be easily obtained by deriving again the mp-QP from \eqref{eq:MPC} but using the data-driven representation of the system in Theorem \ref{th:1}.

	\begin{lemma}[Data-driven mp-QP] \label{lemma:2}
		If condition~\eqref{eq:persistenlyexciting} holds, the control problem in \eqref{eq:MPC} can be recast as the following data-driven convex mp-QP:
		\begin{subequations} \label{eq:DDQP}
			\begin{align}
				\min_{z} &\quad z'H_{d}z \label{eq:DDQP_cost}\\
				\mbox{s.t.} & \quad G_{d}z\leq W_{d} + S_{d}x, \label{eq:DDQP_constr}
			\end{align}
		\end{subequations}
		with $H_{d},G_{d},W_{d},S_{d}$ depending only upon: the data, the fixed horizons $N_x,N_u$ and $N_c$, the penalties $Q,P$ and $R$ and the given feedback gain.
		\hfill $\blacksquare$
	\end{lemma}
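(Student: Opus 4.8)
The plan is to repeat, essentially line by line, the derivation of \eqref{eq:QP2} from \eqref{eq:MPC} recalled in Section~\ref{sec:fromItoE}, with the single modification that the true transition matrices $A,B$ are replaced everywhere by their data-based surrogates obtained from Theorem~\ref{th:1}.

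First I would set $\Theta \triangleq X_{1,T}\,\begin{bmatrix} U_{0,1,T}\\ X_{0,T}\end{bmatrix}^{\dagger}$ and note that this is well defined under \eqref{eq:persistenlyexciting}: the stacked matrix has full row rank $n+m$, hence admits a right inverse and satisfies $\begin{bmatrix} U_{0,1,T}\\ X_{0,T}\end{bmatrix}\begin{bmatrix} U_{0,1,T}\\ X_{0,T}\end{bmatrix}^{\dagger}=I_{n+m}$. Since $X_{1,T}=\begin{bmatrix} B & A\end{bmatrix}\begin{bmatrix} U_{0,1,T}\\ X_{0,T}\end{bmatrix}$ by \eqref{eq:system}, we obtain $\Theta=\begin{bmatrix} B & A\end{bmatrix}$ irrespective of the particular right inverse chosen; partitioning $\Theta=\begin{bmatrix} B_d & A_d\end{bmatrix}$ conformably with $\begin{bmatrix} u(t)\\ x(t)\end{bmatrix}$ then gives $A_d=A$ and $B_d=B$, that is, an exact data-based replica of the prediction recursion \eqref{eq:MPCconstr1}.

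From here I would proceed exactly as in Section~\ref{sec:fromItoE}: unroll the recursion as $x(k)=A_d^{k}x+\sum_{j=0}^{k-1}A_d^{j}B_d\,u(k-j-1)$, fold in the terminal feedback $u(k)=Kx(k)$ for $N_u\le k<N_x$ from \eqref{eq:MPCconstr4}, and stack the free inputs into $U\in\mathbb{R}^{N_u m}$, so that every predicted state reads $x(k)=\Phi_k x+\Psi_k U$ with $\Phi_k,\Psi_k$ assembled from $A_d,B_d,K$. Substituting these expressions into the cost \eqref{eq:MPCcost} yields a quadratic form $U'H_d U+2x'F_d U+x'(\cdot)x$, and substituting them into \eqref{eq:MPCconstr3} yields $G_d U\le W_d+E_d x$; by construction $H_d,F_d,G_d,W_d,E_d$ depend only on the data (through $A_d,B_d$), on $N_x,N_u,N_c$, on $Q,P,R$ and on $K$, never on the unknown $A,B$ separately. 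Completing the square with $z\triangleq U+H_d^{-1}F_d'x$, discarding the $x$-only term (constant in the minimization) and setting $S_d\triangleq E_d+G_d H_d^{-1}F_d'$ delivers \eqref{eq:DDQP}. Positive definiteness of $H_d$, which makes $H_d^{-1}$ meaningful, is inherited from $R\succ 0$: the direct contribution of $u(0),\dots,u(N_u-1)$ to $\sum_k\|u(k)\|_R^2$ is the block-diagonal $\mathrm{diag}(R,\dots,R)\succ 0$, while the state costs, the terminal cost and the indirect cost of the terminal feedback all add positive semidefinite terms in $U$.

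The hard part will be the very first step, namely justifying that one is allowed to iterate the single-step data map \eqref{eq:DD_rep} inside the multi-step prediction, rather than merely invoking it along the recorded trajectory; this is precisely what the rank condition \eqref{eq:persistenlyexciting} buys, through the identity $\Theta=\begin{bmatrix} B & A\end{bmatrix}$. Once this identification is in place, the remainder is the same routine stacking-and-square-completion bookkeeping as in the model-based case, now performed with $A_d,B_d$ in the role of $A,B$, so the stated dependence of $H_d,G_d,W_d,S_d$ follows immediately from the way these matrices are built.
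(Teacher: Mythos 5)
Your proposal is correct and follows essentially the same route as the paper's Appendix~A proof: you identify the data-based one-step map $\Theta = X_{1,T}\smallmat{U_{0,1,T}\\X_{0,T}}^{\dagger} = \begin{bmatrix}B & A\end{bmatrix}$ (the paper's $\gamma_d,\xi_d$ are exactly your $B_d,A_d$, with the identification delegated to Theorem~\ref{th:1}), then redo the standard stacking and square-completion with these surrogates. Your explicit justification of $\Theta=\begin{bmatrix}B & A\end{bmatrix}$ and of $H_d\succ 0$ merely inlines material the paper states separately, so there is no substantive difference.
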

	\begin{proof}
		See Appendix~\ref{appendix1}.
	\end{proof}
	
	Note that the optimization variable in \eqref{eq:DDQP} is given by:
	\begin{equation}\label{eq:DDZexpression1}
		z \triangleq U+H_d^{-1}F_d'x\in\mathbb{R}^{mN_{u}},
	\end{equation}
	with $F_{d}$ being the data-based counterpart of $F$ in \eqref{eq:Zexpression}. We remark that $H_{d}$ is positive definite and, thus, invertible, since the penalties $Q$, $P$ and $R$ have the same features of the ones considered in problem \eqref{eq:MPC}. Therefore, the cost~\eqref{eq:DDQP_cost} is strictly convex and the solution of~\eqref{eq:DDQP} is unique.
	
	By following the same procedure reviewed in Section~\ref{sec:fromItoE}, Lemma~\ref{lemma:2} can be exploited to find a data-driven explicit solution to the predictive control problem, as illustrated next.
	\begin{theorem}[E-DDPC]\label{th:2}
		Let condition~\eqref{eq:persistenlyexciting} hold. Assume that the rows $\tilde{G}_{d}$ of $G_{d}$ in \eqref{eq:DDQP_constr} coupled with active constraints are linearly independent. Then, problem~\eqref{eq:DDQP} admits one and only one solution, resulting in the data-driven explicit predictive law
		\begin{equation}\label{eq:DDexplicit}
			u(x)=\begin{cases}
				F_{d,1}x+g_{d,1}, \mbox{ if } \mathcal{H}_{d,1}x \leq \mathcal{K}_{d,1},\\
				\vdots\\
				F_{d,M}x+g_{d,M}, \mbox{ if } \mathcal{H}_{d,M}x \leq \mathcal{K}_{d,M},
			\end{cases}
		\end{equation}
		which corresponds to the first $m$ components of the control sequence
		\begin{subequations}
			\begin{equation} \label{eq:DDPWAseq}
				U(x)\!=\! \begin{cases}
					\mathcal{F}_{d,1}x+\mathcal{G}_{d,1}, \mbox{ if }~~ \mathcal{H}_{d,1}x\leq \mathcal{K}_{d,1}, \\
					\vdots \\
					\mathcal{F}_{d,M}x+\mathcal{G}_{d,M}, \mbox{ if }~~ \mathcal{H}_{d,M}x\leq \mathcal{K}_{d,M},
				\end{cases}
			\end{equation}
			where   
			\begin{align} %\label{eq:DDPWAdetails}
				& \mathcal{F}_{d,i}=H_{d}^{-1}\tilde{G}_{d,i}'(\tilde{G}_{d,i}H_{d}^{-1}\tilde{G}_{d,i}')^{-1}\tilde{S}_{d,i}-H_{d}^{-1}F_{d}', \label{eq:F}\\
				& \mathcal{G}_{d,i} = H_{d}^{-1}\tilde{G}_{d,i}'(\tilde{G}_{d,i}H^{-1}\tilde{G}_{d,i}')^{-1}\tilde{W}_{d,i},\label{eq:G}\\
				& \mathcal{H}_{d,i}\!=\!\! 
				\begin{bmatrix}
					& \!(\tilde{G}_{d,i}H_{d}^{-1}\tilde{G}_{d,i}')^{-1}\tilde{S}_{d,i} \\
					& \!G_{d} H_{d}^{-1}\tilde{G}_{d,i}'(\tilde{G}_{d,i}H_{d}^{-1}\tilde{G}_{d,i}')^{-1}\tilde{S}_{d,i}-S_{d}
				\end{bmatrix}\!, \\
				& \mathcal{K}_{d,i}\!=\!\!
				\begin{bmatrix}
					& \!-(\tilde{G}_{d,i}H_{d}^{-1}\tilde{G}_{d,i}')^{-1}\tilde{W}_{d,i}\\
					&\! -G_{d}H_{d}^{-1}\tilde{G}_{d,i}'(\tilde{G}_{d,i}H_{d}^{-1}\tilde{G}_{d,i}')^{-1}\tilde{W}_{d,i}\!+\!W_{d}
				\end{bmatrix}\!,
			\end{align}
			with $H_{d}$, $G_{d}$, $W_{d}$ and $S_{d}$ being the data-driven matrices characterizing \eqref{eq:DDQP} and $\tilde{G}_{d,i},\tilde{W}_{d,i},\tilde{S}_{d,i}$ being the rows of $G_{d},W_{d},S_{d}$ associated with the $i$-th set of active constraints, for $i=1,\ldots,M$. \hfill $\blacksquare$
		\end{subequations}
	\end{theorem}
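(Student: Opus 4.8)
The plan is to replicate, step by step, the model-based construction recalled in Section~\ref{sec:fromItoE}, but starting from the data-driven mp-QP of Lemma~\ref{lemma:2} instead of from \eqref{eq:QP2}. The only structural facts exploited in that construction are that the Hessian is positive definite and that the rows of the constraint matrix associated with an active set are linearly independent; both hold here by hypothesis (the former was noted right after Lemma~\ref{lemma:2}, the latter is assumed in the statement), so the same derivation goes through after substituting $(H,F,G,W,S)$ with $(H_d,F_d,G_d,W_d,S_d)$. In particular, strict convexity of \eqref{eq:DDQP_cost} already gives existence and uniqueness of the minimizer for every feasible parameter $x$, which settles the first claim.

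The core of the argument is the closed-form solution of the KKT system of \eqref{eq:DDQP}: stationarity $2H_d z + G_d'\lambda = 0$, primal feasibility $G_d z \leq W_d + S_d x$, dual feasibility $\lambda \geq 0$, and complementary slackness. Fixing a candidate active set indexed by $i$, only the multipliers $\tilde\lambda$ attached to the active rows $\tilde G_{d,i}$ may be nonzero, so stationarity gives $z = -\tfrac{1}{2} H_d^{-1}\tilde G_{d,i}'\tilde\lambda$. Plugging this into the active equalities $\tilde G_{d,i} z = \tilde W_{d,i} + \tilde S_{d,i} x$ and using that $\tilde G_{d,i} H_d^{-1}\tilde G_{d,i}'$ is nonsingular --- which is exactly where linear independence of $\tilde G_{d,i}$ combined with $H_d \succ 0$ enters --- yields $\tilde\lambda = -2(\tilde G_{d,i}H_d^{-1}\tilde G_{d,i}')^{-1}(\tilde W_{d,i}+\tilde S_{d,i}x)$ and hence $z = H_d^{-1}\tilde G_{d,i}'(\tilde G_{d,i}H_d^{-1}\tilde G_{d,i}')^{-1}(\tilde S_{d,i}x+\tilde W_{d,i})$.

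Undoing the change of variables \eqref{eq:DDZexpression1}, i.e.\ setting $U = z - H_d^{-1}F_d'x$, then delivers the affine branch $U(x) = \mathcal{F}_{d,i}x + \mathcal{G}_{d,i}$ with $\mathcal{F}_{d,i},\mathcal{G}_{d,i}$ as in \eqref{eq:F}--\eqref{eq:G}. The polyhedron on which this branch is optimal is identified by the two KKT requirements not yet imposed: dual feasibility $\tilde\lambda \geq 0$ and primal feasibility of the inactive rows $G_d z \leq W_d + S_d x$. Substituting the affine expressions for $\tilde\lambda$ and $z$ obtained above, these stack into $\mathcal{H}_{d,i}x \leq \mathcal{K}_{d,i}$ with exactly the stated matrices. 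Ranging over all active sets that give a nonempty full-dimensional region --- finitely many, hence $M$ of them --- produces the PWA sequence \eqref{eq:DDPWAseq}; by the multi-parametric quadratic programming arguments recalled in Section~\ref{sec:fromItoE} (see also \cite{Alessio2009}), these regions are polyhedra covering the feasible parameter set with non-overlapping interiors and the resulting map is continuous. Retaining the first $m$ components of $U(x)$ gives \eqref{eq:DDexplicit}.

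I do not anticipate a genuine obstacle, since the content is a transcription of the classical explicit-MPC derivation to the data-based matrices of Lemma~\ref{lemma:2}; the one point deserving explicit care is the invertibility of $\tilde G_{d,i}H_d^{-1}\tilde G_{d,i}'$, which follows from pairing the assumed linear independence of $\tilde G_{d,i}$ with positive definiteness of $H_d$, together with the routine fact that only finitely many active sets yield a nonempty region. The equivalence between \eqref{eq:DDexplicit} and its model-based counterpart \eqref{eq:explicitMPC}, treated elsewhere in the paper, is not part of this statement.
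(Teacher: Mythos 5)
Your proposal is correct and follows essentially the same route as the paper's proof: writing the KKT system for \eqref{eq:DDQP}, eliminating the multipliers of inactive constraints, solving for the active multipliers via the invertible matrix $\tilde{G}_{d,i}H_d^{-1}\tilde{G}_{d,i}'$, and reading the region inequalities off dual and primal feasibility. The only cosmetic difference is that you keep the factor $2$ from differentiating $z'H_dz$ explicitly (the paper absorbs it into $\lambda$), which cancels and leaves the same final expressions.
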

	\begin{proof}
		Since the problem in \eqref{eq:DDQP} is strictly convex, the KKT conditions are necessary and sufficient to characterize optimality. Therefore, to find the solution of \eqref{eq:DDQP} analytically, let us consider the associated KKT conditions, namely:
		\begin{subequations}
			\label{eq:KKT}
			\begin{align}
				& H_dz+G_d'\lambda=0,  \label{eq:KKT1}\\
				& \lambda'(G_dz-W_d-S_dx)=0, \label{eq:KKT2}\\
				& \lambda\geq 0 ,\label{eq:KKT3} \\
				& G_dz\leq W_d+S_dx, \label{eq:KKT4}
			\end{align}
		\end{subequations}
		where $\lambda$ is the vector of Lagrange multipliers associated with the inequality constraint in \eqref{eq:DDQP_constr}. 
		From the stationarity condition in \eqref{eq:KKT1}, we can derive the following relationship between $z$ in \eqref{eq:Zexpression} and $\lambda$:
		\begin{equation} \label{eq:z}
			z=-H_d^{-1}G_d' \lambda,
		\end{equation}
		that, in turn, allows us to recast the complementary slackness condition in \eqref{eq:KKT2} as
		\begin{equation*}
			\lambda'(-G_dH_d^{-1}G_d'\lambda-W_d+S_dx)=0.
		\end{equation*} 
		Let $\bar{\lambda}$ be the subset of Lagrange multipliers coupled with the inactive constraints and $\tilde{\lambda}$ the remaining active ones. By combining complementary slackness (see \eqref{eq:KKT2}) and the dual feasibility condition \eqref{eq:KKT3}, the Lagrange multipliers $\bar{\lambda}$ turn out to be zero. Moreover, straightforward manipulations of the above equality allow us to equivalently define $\tilde{\lambda}$ as:
		\begin{equation}
			\tilde{\lambda}=-(\tilde{G}_dH_d^{-1}\tilde{G}_d')^{-1}(\tilde{W}_d+\tilde{S}_dx),
		\end{equation} 
		where $\tilde{G}_{d}$, $\tilde{W}_{d}$ and $\tilde{S}_{d}$ collect the rows of $G_{d}$, $W_{d}$ and $S_{d}$ associated with active constraints, respectively. Since the rows of $\tilde{G}_{d}$ are assumed to be linearly independent, it holds that
		\begin{equation} \label{eq:DDzexpression}
			z=H_d^{-1}\tilde{G}_d'(\tilde{G}_dH_d^{-1}\tilde{G}_d')^{-1}(\tilde{W}_d+\tilde{S}_dx),
		\end{equation}
		from which straightforward manipulations result into \eqref{eq:F}-\eqref{eq:G}. The primal and dual feasibility conditions in \eqref{eq:KKT3} and \eqref{eq:KKT4} allow us to explicitly find the regions of the state space where \eqref{eq:DDzexpression} holds, which are defined as
		\begin{subequations}
			\label{eq:regioncharacterization}
			\begin{align}
				& -(\tilde{G}_dH_d^{-1}\tilde{G}_d')^{-1}(\tilde{W}_d+\tilde{S}_dx)\geq 0 \label{eq:positivity}\\
				& G_d  H_d^{-1}\tilde{G}_d'(\tilde{G}_dH_d^{-1}\tilde{G}_d')^{\!-1}(\tilde{W}_d\!+\!\tilde{S}_dx) \!\leq\! W_d\!+\!S_dx \label{eq:constrainthold}.
			\end{align}
		\end{subequations}
		By relying on \eqref{eq:DDzexpression} and \eqref{eq:regioncharacterization} and considering all possible combinations of active constraints, straightforward manipulations result into the explicit control sequence in \eqref{eq:DDPWAseq}, thus concluding the proof. 
	\end{proof}
	
	\begin{remark}
		In case of \emph{degeneracy}, namely when the combinations of active constraints lead to linearly dependent rows in $\tilde{G}_{d}$, the problem can still be handled by exploiting an approach similar to the one in \cite{Bemporad2002b}. Even in this scenario, no identification step would be required to explicitly solve the predictive control problem. \hfill $\blacksquare$ 
	\end{remark}
	
	Lemma~\ref{lemma:2} and Theorem~\ref{th:2} allows us to further infer the following results.
	\begin{theorem}[Model/Data equivalence]\label{th:3}
		Let the assumptions of Theorem~\ref{th:2} hold. Then the data-driven explicit law in \eqref{eq:DDexplicit} is equivalent to the model-based predictive controller in \eqref{eq:explicitMPC}. \hfill $\blacksquare$
	\end{theorem}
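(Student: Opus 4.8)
The plan is to show that the model-based and data-driven explicit laws are built from identical ingredients, so that they must coincide region by region. The key observation is that the data-driven representation in Theorem~\ref{th:1} reproduces the \emph{exact} one-step map of system~\eqref{eq:system}: when condition~\eqref{eq:persistenlyexciting} holds, one has $X_{1,T} = \begin{bmatrix} B & A \end{bmatrix}\begin{bmatrix} U_{0,1,T} \\ X_{0,T} \end{bmatrix}$, hence $X_{1,T}\begin{bmatrix} U_{0,1,T} \\ X_{0,T} \end{bmatrix}^\dagger \begin{bmatrix} u \\ x \end{bmatrix} = \begin{bmatrix} B & A \end{bmatrix}\begin{bmatrix} u \\ x \end{bmatrix} = Ax + Bu$ for every $(x,u)$. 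Thus the data-driven prediction model is numerically identical to the true model, not merely an approximation of it.

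From there the argument is essentially a chain of identifications. First I would invoke Lemma~\ref{lemma:2}: since the matrices $H_d, F_d, G_d, W_d, S_d$ are assembled from \eqref{eq:MPC} through precisely the same sequence of substitutions used to build $H, F, G, W, S$ in Section~\ref{sec:fromItoE}, with the only difference being that the true $(A,B)$ are replaced by the data-based map — which we just argued equals $(A,B)$ in its action — the stacked prediction matrices, and therefore $H_d = H$, $F_d = F$, $G_d = G$, $W_d = W$, and consequently $S_d = E_d + G_d H_d^{-1} F_d' = E + G H^{-1} F' = S$. I would make this explicit by appealing to the construction carried out in Appendix~\ref{appendix1}. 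Second, because the defining data of the two mp-QPs~\eqref{eq:QP2} and~\eqref{eq:DDQP} agree, the sets of candidate active-constraint combinations agree, and for each such combination $i$ the submatrices satisfy $\tilde{G}_{d,i} = \tilde{G}_i$, $\tilde{W}_{d,i} = \tilde{W}_i$, $\tilde{S}_{d,i} = \tilde{S}_i$. Third, substituting these equalities into the closed-form expressions \eqref{eq:F}--\eqref{eq:G} (and the companions for $\mathcal{H}_{d,i}, \mathcal{K}_{d,i}$) and comparing with \eqref{eq:PWAdetails} yields $\mathcal{F}_{d,i} = \mathcal{F}_i$, $\mathcal{G}_{d,i} = \mathcal{G}_i$, $\mathcal{H}_{d,i} = \mathcal{H}_i$, $\mathcal{K}_{d,i} = \mathcal{K}_i$ for all $i = 1,\dots,M$. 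Hence the PWA sequences \eqref{eq:DDPWAseq} and \eqref{eq:PWAseq} are the same function, and taking the first $m$ components gives that \eqref{eq:DDexplicit} and \eqref{eq:explicitMPC} coincide.

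I expect the only real subtlety to be making the claim ``$H_d = H$, $G_d = G$, etc.'' fully rigorous rather than asserting it. The construction of these matrices involves stacking powers of the dynamics, folding in the terminal feedback $K$ over $k = N_u,\dots,N_x-1$, and completing the square; one must check that every place where $A$ and $B$ enter is a place where only the \emph{action} of the map $(u,x)\mapsto Ax+Bu$ matters, so that replacing the pair by the data-based expression of Theorem~\ref{th:1} changes nothing. Since Theorem~\ref{th:1} gives an \emph{exact} equivalence of the state-space system (not an estimate), this goes through, but it is the step that carries the mathematical content; the rest is bookkeeping. A clean way to phrase it is: the map in \eqref{eq:DD_rep} and the map in \eqref{eq:MPCconstr1} are the same affine operator, the finite-horizon predictor is a fixed polynomial functional of that operator together with the data $(Q,R,P,K,\mathcal{C}_x,\mathcal{C}_u,d)$, hence the predictors agree, hence the mp-QP data agree, hence — by uniqueness of the KKT-based solution under the linear-independence assumption shared by Theorems~\ref{th:2} and the model-based derivation — the explicit laws agree on each polyhedral region and on the (identical) partition.
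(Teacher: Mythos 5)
Your proposal is correct and follows essentially the same route as the paper: the paper's own proof simply invokes Lemma~\ref{lemma:2} (equivalence of the data-driven and model-based mp-QPs, resting on the exactness of the representation in Theorem~\ref{th:1}) and notes that the KKT-based explicitation steps are identical in both cases. Your version merely spells out the chain of identifications ($H_d=H$, $G_d=G$, matching active sets, matching region formulas) that the paper leaves implicit.
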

	
	\begin{proof}
		Based on the results of Lemma~\ref{lemma:2}, the data-driven predictive control problem \eqref{eq:DDQP} originating the explicit law in \eqref{eq:DDexplicit} is equivalent to \eqref{eq:QP2} and, thus, to the MPC problem in \eqref{eq:MPC}. Since the steps leading to the explicit controller in \eqref{eq:DDexplicit} are the same performed to obtain the model-based explicit predictive controller, the equivalence straightforwardly follows.
	\end{proof} 
	%
	%The result of Theorem~\ref{th:3} further allows us to derive the following property of the explicit control law in \eqref{eq:DDexplicit}.
	\begin{lemma}[Continuity]\label{lemma:3}
		The data-driven PWA control law in \eqref{eq:DDexplicit} is continuous over the boundaries of the polyhedral regions characterizing it. \hfill $\blacksquare$ 
	\end{lemma}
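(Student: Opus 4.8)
The plan is to reduce the statement to the fact that the \emph{unique} optimizer of \eqref{eq:DDQP} depends continuously on $x$, and to prove that continuity by gluing the affine pieces of \eqref{eq:DDPWAseq} through uniqueness of the KKT solution.

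First I would note that, since $H_{d}\succ 0$, the cost \eqref{eq:DDQP_cost} is strictly convex and the feasible set $\{z:\,G_{d}z\le W_{d}+S_{d}x\}$ is a polyhedron depending affinely on $x$; hence, on the set $\mathcal{X}_{f}$ of parameters for which \eqref{eq:DDQP} is feasible, the primal minimizer $z^{\star}(x)$ is unique. The regions $\mathcal{H}_{d,i}x\le\mathcal{K}_{d,i}$ of \eqref{eq:DDPWAseq} cover $\mathcal{X}_{f}$, and on each of them $U^{\star}(x)=\mathcal{F}_{d,i}x+\mathcal{G}_{d,i}$. Because the control sequence is recovered from $z^{\star}$ through the affine map $U^{\star}(x)=z^{\star}(x)-H_{d}^{-1}F_{d}'x$ of \eqref{eq:DDZexpression1}, whose coefficients do \emph{not} depend on the active set, and $u(x)$ is read off as the first $m$ components of $U^{\star}(x)$, it suffices to show that $x\mapsto z^{\star}(x)$ (equivalently $x\mapsto U^{\star}(x)$) agrees on the overlaps of the regions in \eqref{eq:DDPWAseq}.

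Next I would fix two regions $CR_{i}$ and $CR_{j}$ whose closures meet, and a point $x_{0}\in\overline{CR_{i}}\cap\overline{CR_{j}}$ on the common boundary. Retracing the derivation in the proof of Theorem~\ref{th:2}, the candidate $z_{i}(x_{0})=\mathcal{F}_{d,i}x_{0}+\mathcal{G}_{d,i}+H_{d}^{-1}F_{d}'x_{0}$, together with the multipliers $\tilde{\lambda}_{i}(x_{0})=-(\tilde{G}_{d,i}H_{d}^{-1}\tilde{G}_{d,i}')^{-1}(\tilde{W}_{d,i}+\tilde{S}_{d,i}x_{0})$ on the $i$-th active set and zero multipliers elsewhere, satisfies the stationarity relation \eqref{eq:KKT1} and the complementary slackness condition \eqref{eq:KKT2} \emph{identically in $x$} by construction, while the inequalities \eqref{eq:positivity}--\eqref{eq:constrainthold} that define $CR_{i}$ are precisely dual feasibility \eqref{eq:KKT3} and primal feasibility \eqref{eq:KKT4}. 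Since these (closed) inequalities hold at $x_{0}\in\overline{CR_{i}}$, the pair $(z_{i}(x_{0}),\lambda_{i}(x_{0}))$ is a KKT point of \eqref{eq:DDQP} at $x_{0}$; the same argument for $CR_{j}$ makes $(z_{j}(x_{0}),\lambda_{j}(x_{0}))$ a KKT point at $x_{0}$ as well. Strict convexity forces the primal KKT solution at $x_{0}$ to be unique, so $z_{i}(x_{0})=z_{j}(x_{0})$, i.e.\ $\mathcal{F}_{d,i}x_{0}+\mathcal{G}_{d,i}=\mathcal{F}_{d,j}x_{0}+\mathcal{G}_{d,j}$. Letting $x_{0}$ vary over $\overline{CR_{i}}\cap\overline{CR_{j}}$ shows the two affine pieces coincide on every shared boundary, and taking first components yields continuity of \eqref{eq:DDexplicit}.

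The step I expect to cost the most care is checking that membership of $x_{0}$ in the \emph{closure} $\overline{CR_{i}}$ still delivers a \emph{full} KKT point --- that \eqref{eq:regioncharacterization} genuinely encodes \eqref{eq:KKT3}--\eqref{eq:KKT4} and that stationarity and complementary slackness are inherited unchanged on the boundary, including the degenerate facets where a multiplier or a slack vanishes; this is precisely where the linear-independence hypothesis of Theorem~\ref{th:2}, which keeps $(\tilde{G}_{d,i}H_{d}^{-1}\tilde{G}_{d,i}')^{-1}$ well defined on $\overline{CR_{i}}$, enters. Two alternative routes would avoid this bookkeeping: invoking Theorem~\ref{th:3}, the data-driven law \eqref{eq:DDexplicit} coincides with the model-based explicit MPC law, whose continuity is classical (see, e.g., \cite{Bemporad2002b}); or applying the Berge maximum theorem, the continuity of $x\mapsto z^{\star}(x)$ follows at once from strict convexity of \eqref{eq:DDQP_cost} together with the continuous polyhedral dependence of the feasible set on $x$.
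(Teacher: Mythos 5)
Your proof is correct, but it takes a genuinely different route from the paper's. The paper disposes of Lemma~\ref{lemma:3} in one line: by Theorem~\ref{th:3} the law \eqref{eq:DDexplicit} coincides with the model-based explicit controller \eqref{eq:explicitMPC}, whose continuity is known from \cite{Bemporad2002b}, so a formal proof is omitted --- this is precisely the first of the two ``alternative routes'' you mention at the end. What you actually carry out in the body of your argument is the direct proof: uniqueness of the minimizer of the strictly convex mp-QP \eqref{eq:DDQP}, the observation that on the closure of each region the affine candidate together with its multipliers is a full KKT point (stationarity and complementary slackness hold identically in $x$ because the active constraints are tight by construction, while \eqref{eq:positivity}--\eqref{eq:constrainthold} are the closed primal/dual feasibility inequalities), and hence agreement of neighbouring affine pieces on shared boundaries. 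This is essentially the classical continuity argument for explicit MPC, transplanted verbatim to the data-driven matrices $H_d,G_d,W_d,S_d$. What your version buys is self-containedness --- it does not lean on Theorem~\ref{th:3} or on the external reference, and it makes explicit where the linear-independence hypothesis on $\tilde{G}_{d,i}$ is used (to keep $(\tilde{G}_{d,i}H_d^{-1}\tilde{G}_{d,i}')^{-1}$ well defined on the closed regions). What the paper's version buys is brevity, at the cost of deferring the substance to the cited model-based literature. Both are sound; yours is the more informative proof to include if one wants the data-driven development to stand on its own.
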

	\begin{proof}
		The continuity over the boundaries of the polyhedral regions can be inferred from the properties of the model-based explicit predictive controller (see \cite{Bemporad2002b}) and its equivalence with the data-driven solution, as dictated by Theorem~\ref{th:3}. A formal proof is thus omitted, as it straightforwardly follows from the above results.
	\end{proof}
	
	\section{Practical implementation and noise handling}\label{sec:practice}
	The main steps to compute the E-DDPC law are summarized in Algorithm~\ref{algo1}. After an initial phase in which the data are manipulated to cast the data-driven mp-QP problem (see steps~\ref{step:1}-\ref{step:4}), one has to check if the considered problem is characterized by degenerate situations, which can be handled by exploiting the same procedure proposed in \cite{Bemporad2002b} without requiring any prior identification of a model for the system $\mathcal{S}$ (see step~\ref{step:6.1}). At step~\ref{step:7}, Theorem~\ref{th:2} can then be directly applied to retrieve the explicit control law. Since the result in Theorem~\ref{th:2} relies on the explicitation of the KKT conditions for problem~\eqref{eq:DDQP}, it leads to an enumeration of all possible combinations of active constraints. In turn, this procedure might result in an overly-complex PWA controller. To overcome this limitation, at step~\ref{step:8} polyhedral regions characterized by the same control law are merged by following the approach in \cite{Bemporad2001}. 
	
	Once Algorithm~\ref{algo1} has been run, the control action at time $t$ solely requires to $(i)$ explore all polyhedral regions characterizing the reduced law \eqref{eq:DDexplicit} in order to locate the one the current state $x(t)$ belongs to, and $(ii)$ compute the corresponding state-feedback affine input. The computational time required for this operation increases with the horizons $N_{c}$, $N_{u}$ and with the number of inputs and states, since the latter is assumed to be fully measurable. Therefore, also the data-driven version of the explicit solution is mainly appealing when short horizons or blocking control moves are used~\cite{Alessio2009}. 
	\begin{algorithm}[!tb]
		\caption{Noiseless E-DDPC: offline procedure}
		\label{algo1}
		~\textbf{Input}: Dataset $\mathcal{D}_{T}$; penalties $Q, P \succeq 0$; $R \succ 0$; horizons $N_{x}, N_{u}, N_{c}\!>\!0$; constraints $\mathcal{C}_{x},\mathcal{C}_{u}$; feedback gain~$K$.
		\vspace*{.1cm}\hrule\vspace*{.1cm}
		\begin{enumerate}[label=\arabic*., ref=\theenumi{}]
			\item\label{step:1} \textbf{Construct} the data-based matrix in \eqref{eq:persistenlyexciting}.
			\item\label{step:2} \textbf{Build} $H_{d}$, $G_{d}$, $W_{d}$, $S_{d}$ in \eqref{eq:DDQP} based on the chosen cost and constraints.
			\item \textbf{Find} all possible combinations of active constraints. 
			\item\label{step:4} \textbf{Isolate} the matrices $\tilde{G}_{d}$, $\tilde{W}_{d}$ and $\tilde{S}_{d}$ comprising the rows of $G_{d}$, $W_{d}$, $S_{d}$ associated to the sets of active constraints
			\item \textbf{If not} all rows of $\tilde{G}_{d}$ are \textbf{linearly independent}
			\begin{enumerate}[label=\theenumi{}.\arabic*., ref=\theenumi{}.\theenumii{}]
				\item\label{step:6.1} \textbf{Handle} the degeneracy, \emph{e.g.,} as in \cite{Bemporad2002b}.
			\end{enumerate}			
			\item\label{step:7} \textbf{Find} the PWA explicit law as in Theorem~\ref{th:2}.
			\item\label{step:8} \textbf{Merge} polyhedral regions whenever possible, \emph{e.g.,} with the approach proposed in \cite{Bemporad2001}.
		\end{enumerate}
		\vspace*{.1cm}\hrule\vspace*{.1cm}
		~\textbf{Output}: Optimal explicit law $u(x)$.
	\end{algorithm}
	
	\subsection{Stability and recursive feasibility}
	Among the penalties characterizing the predictive cost in \eqref{eq:MPCcost}, it is known that the choice of the terminal weight $P \succeq 0$ and the static feedback $K$, dictating the input for $k\geq N_{u}$, influence the stability properties of the predictive controller in \eqref{eq:DDexplicit} \cite{Bemporad2002b}. Based on the existing guidelines for their choice in the model-based case, we can obtain their data-driven counterparts as follows. 
	
	If the system is known to be open-loop stable, it is possible to select $K=0$ and set $P$ as the solution to the data-driven Lyapunov equation
	\begin{equation}\label{eq:Lyapunov1}
		P=\xi_d'P\xi_d+Q,
	\end{equation}
	with 
	\begin{equation}\label{eq:xid}
		\xi_d = 
		X_{1,T} \begin{bmatrix}U_{0,1,T} \\ \hline X_{0, T} \end{bmatrix} ^\dagger \begin{bmatrix}0_{m\times n} \\ I_{n} \end{bmatrix}.
	\end{equation}
	When the system is not known to be open-loop stable or it is known to be unstable, the terminal penalty $P$ and the feedback gain $K$ can be instead selected as the solutions of the linear quadratic regulation (LQR) problem, as discussed in \cite{DePersis2021}. When these choices are performed, the following asymptotic results can be directly inferred from the model-based ones.
	\begin{theorem}[Stability and feasibility]\label{lemma:4}
		Let condition \eqref{eq:persistenlyexciting} hold. Let $N_x=\infty$, $K=0$ or $K$ be the LQR gain obtained as in \cite{DePersis2021}, and $N_{c}<\infty$ be sufficiently large to guarantee the existence of feasible input sequences at each time step. Then, the predictive control law resulting from the solution of \eqref{eq:DDQP} asymptotically stabilizes the system in \eqref{eq:system}, while enforcing the fulfillment of constraints from all initial states $x$ such that the optimization problem is feasible at time $t=0$. \hfill $\blacksquare$
	\end{theorem}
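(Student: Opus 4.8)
The plan is to reduce the statement entirely to the corresponding model-based stability result and then invoke the classical Lyapunov/terminal-cost argument for MPC. The starting point is the exactness of the data-based representation: under \eqref{eq:persistenlyexciting} and in the noiseless setting, Theorem~\ref{th:1} guarantees that the data-driven update reproduces the true dynamics $Ax(t)+Bu(t)$ for \emph{every} admissible $(u(t),x(t))$. Consequently the matrix $\xi_{d}$ in \eqref{eq:xid} equals the true open-loop matrix $A$, and, setting $u=Kx$ in the same representation, one recovers exactly $A+BK$. Hence the terminal penalty produced by the data-driven Lyapunov equation \eqref{eq:Lyapunov1} (when $K=0$), or by the data-driven LQR construction of \cite{DePersis2021} (when $K$ is the LQR gain), coincides \emph{exactly} with the model-based stabilizing pair $(P,K)$, i.e. $P=(A+BK)'P(A+BK)+Q+K'RK$ with $K$ Schur-stabilizing. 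By Lemma~\ref{lemma:2} and Theorem~\ref{th:3}, the data-driven mp-QP \eqref{eq:DDQP} is equivalent to the model-based one, so the closed loop generated by the E-DDPC law \eqref{eq:DDexplicit} is identical to that generated by the corresponding model-based predictive controller; it therefore suffices to prove stability and recursive feasibility for the latter.

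Next I would recast the problem in the equivalent finite-horizon form. With $N_{x}=\infty$ and $u(k)=Kx(k)$ for $k\ge N_{u}$, the tail of the cost telescopes against the Lyapunov identity for $A+BK$, so that $\sum_{k=N_{u}}^{\infty}\big(\|x(k)\|_{Q}^{2}+\|u(k)\|_{R}^{2}\big)=\|x(N_{u})\|_{P}^{2}$; the infinite-horizon problem is thus equivalent to a finite-horizon one over $N_{u}$ steps with terminal weight $P$ and $N_{c}$ constraint steps. Recursive feasibility then follows from the standard shifted-candidate argument: given the optimal (hence feasible) sequence at time $t$, discard its first element and append $Kx(N_{u})$; since $N_{c}$ is assumed large enough that the $K$-driven tail stays constraint-admissible (equivalently, the state reaches a $K$-invariant admissible set before step $N_{c}$), this candidate is feasible at $t+1$, so feasibility at $t=0$ propagates to all $t\ge0$.

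For asymptotic stability I would use the optimal value function $V^{\star}(x)$ of \eqref{eq:DDQP} (equivalently of \eqref{eq:MPC}) as a Lyapunov function. Evaluating the cost of the shifted candidate and using the Lyapunov identity for $P$ gives the usual decrease $V^{\star}(x(t+1))\le V^{\star}(x(t))-\|x(t)\|_{Q}^{2}-\|u^{\star}(t)\|_{R}^{2}$, together with $V^{\star}(0)=0$ and $V^{\star}(x)>0$ for $x\neq0$ (using $R\succ0$ and controllability of $\mathcal{S}$, plus a detectability argument on $(A,Q^{1/2})$ that also lets one conclude $x(t)\to0$ from $u^{\star}(t)\to0$ when $Q$ is only semidefinite). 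Monotone boundedness of $V^{\star}$ along the trajectory then yields $x(t)\to0$, and a local quadratic upper bound on $V^{\star}$ near the origin upgrades attractivity to asymptotic stability. Because every object entering this argument ($H_{d},G_{d},W_{d},S_{d}$, the polyhedral-region description, $P$ and $K$) is computed from data but, by the equivalence above, numerically equals its model-based counterpart, the conclusion transfers verbatim to the E-DDPC law, which proves Theorem~\ref{lemma:4}.

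The main obstacle I anticipate is not the Lyapunov bookkeeping but making the two ``translation'' steps airtight: (i) verifying that the data-driven terminal ingredients \eqref{eq:Lyapunov1}--\eqref{eq:xid} and the data-driven LQR of \cite{DePersis2021} satisfy the model-based stabilizing conditions \emph{exactly} — this rests entirely on the exact reconstruction in Theorem~\ref{th:1} and is precisely what breaks under noisy data; and (ii) handling the semidefinite $Q$ together with the $N_{x}=\infty$ terminal term, for which a detectability/observability assumption on $(A,Q^{1/2})$ (implicit in the guidelines referenced for choosing $P$ and $K$) is required to guarantee positive definiteness of $V^{\star}$ and convergence of the state rather than merely of the input. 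Once these points are settled, the theorem is a direct corollary of Theorem~\ref{th:3} and standard MPC stability theory.
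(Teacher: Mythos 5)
Your proposal is correct and follows essentially the same route as the paper: reduce the statement to the model-based case via the exact equivalence of the data-driven mp-QP \eqref{eq:DDQP} with \eqref{eq:QP2} (Lemma~\ref{lemma:2}/Theorem~\ref{th:3}), then invoke the classical shifted-candidate/value-function argument for MPC stability and recursive feasibility. The paper simply cites \cite{Bemporad2002b} for that second step, whereas you spell it out and correctly flag the detectability caveat for semidefinite $Q$ and the exactness of the data-driven terminal ingredients, both of which the paper leaves implicit.
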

	\begin{proof}
		This result stems from the fact that Lemma~\ref{lemma:2} guarantees that the data-driven problem in \eqref{eq:DDQP} is exactly equal to the model-based one in \eqref{eq:QP2}. The latter corresponds to the one in \eqref{eq:MPC}, for which a similar asymptotic result hold as shown in \cite{Bemporad2002b}. The proof easily follows from this concatenation of equalities.
	\end{proof}
	\subsection{Tracking E-DPCC}\label{sec:tracking}
	As for the model-based case, problem~\eqref{eq:DDQP} can be extended to attain offset-free tracking of a user-defined reference signal. This entails a change in the cost to be optimized from the one shown in \eqref{eq:MPCcost} to
	\begin{subequations}\label{eq:trackingDDPC}
		\begin{equation}
			\sum_{k=0}^{N_{x}-1} \left[\|x(k)-r(t)\|_{Q}^{2}+\|\delta u(k)\|_{R}^{2}\right],
		\end{equation}
		where $\delta u(k)$ is defined as
		\begin{equation}
			u(k)=u(k-1)+\delta u(k), ~~ k \geq 0,
		\end{equation}
		and it can be eventually subject to polytopic constraints for $0 \leq k \leq N_{u}$, while it satisfies the following
		\begin{equation}
			\delta u(k)=0,~~\forall k \geq N_{u}.
		\end{equation}
	\end{subequations}
	This reformulation leads to a data-driven problem similar to the one in \eqref{eq:DDQP}, with a data-driven input increment $\delta u(x)$ that depends on the extended vector
	\begin{equation}\label{eq:augm_vector}
		\begin{bmatrix}
			x(t)' & u(t-1)' & r(t)'
		\end{bmatrix}'.
	\end{equation}
	Such a vector has to be considered instead of the state $x(t)$ to find the optimal control action at time $t$. We remark that this formulation entails that no preview of the reference is available, so that the set point is frozen to $r(t) \in \mathbb{R}^{n}$ over the state horizon $N_{x}$. If the set point to be tracked is known in advance, the cost can be modified by replacing $r(t)$ with $r(t+k)$ and augmenting the extended vector in \eqref{eq:augm_vector} accordingly. 
	\subsection{Handling noise in E-DDPC}\label{sec:handlenoise}
	All results shown in the previous sections are derived in the ideal case of noiseless data. Nonetheless, based on the properties of the noisy dataset $\mathcal{D}_{T}^{n}$ introduced in Section~\ref{sec:problem}, our findings can be extended to the noisy case by relying on the following lemma.
	\begin{lemma}[Consistency]\label{lemma:5}
		Assume $L$ noisy dataset $D_{T}^{n,l}=\{\mathcal{U}_{T},\mathcal{Y}_{T}^{n,l}\}$, with $l=1,\ldots,L$, can be gathered by exciting the system with the same input sequence, while observing different realization of the measurement noise. Assume that the measurement noise is white and with zero mean. Then, given the definition of the noisy outputs in \eqref{eq:noisy_y}, the following asymptotic result holds:
		\begin{equation}\label{eq:lln}
			\lim_{L \rightarrow \infty} \frac{1}{L} \sum_{\ell=1}^{L} y^{n}(t;l)=y(t), ~~ \forall t=0,\ldots,T.
		\end{equation}
		\hfill $\blacksquare$
	\end{lemma}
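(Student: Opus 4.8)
The plan is to strip each measurement down to its deterministic part plus noise and then invoke the law of large numbers. First I would use the definition in \eqref{eq:noisy_y} to write, for every experiment $\ell$ and every time instant $t$,
\begin{equation*}
y^{n}(t;\ell) = y(t;\ell) + v(t;\ell).
\end{equation*}
The key observation is that, because $\mathcal{S}$ in \eqref{eq:system} is a deterministic LTI system and all $L$ experiments are driven by the \emph{same} input sequence $\mathcal{U}_{T}$ (from the same initial condition), the noiseless trajectory is common to all experiments, i.e. $y(t;\ell)=y(t)$ for all $\ell$; hence the only $\ell$-dependent quantity is the noise realization $v(t;\ell)$.

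Next I would average over the $L$ datasets and exploit linearity:
\begin{equation*}
\frac{1}{L}\sum_{\ell=1}^{L} y^{n}(t;\ell) = y(t) + \frac{1}{L}\sum_{\ell=1}^{L} v(t;\ell).
\end{equation*}
By assumption the measurement noise is white with zero mean, so across the independently collected datasets the samples $\{v(t;\ell)\}_{\ell=1}^{L}$ are i.i.d. with $\mathbb{E}[v(t;\ell)]=0$ and covariance $\Sigma$. Applying the (strong) law of large numbers componentwise — the finite second moment being guaranteed by $\Sigma\in\mathbb{R}^{n\times n}$ — gives $\tfrac{1}{L}\sum_{\ell=1}^{L} v(t;\ell)\to 0$ as $L\to\infty$, and therefore $\tfrac{1}{L}\sum_{\ell=1}^{L} y^{n}(t;\ell)\to y(t)$, which is exactly \eqref{eq:lln}.

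Finally, since the experiment horizon $T$ is finite, the convergence can be taken to hold simultaneously for all $t=0,\dots,T$, as a finite intersection of probability-one events still has probability one. I do not expect a genuine obstacle here: the claim is essentially a restatement of the law of large numbers, and the only point requiring care is the separation of the common noiseless signal from the experiment-specific noise — which is precisely why re-exciting the plant with the same $\mathcal{U}_{T}$ averages out $v$ without biasing $y$.
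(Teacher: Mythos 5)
Your proposal is correct and follows essentially the same route as the paper: decompose $y^{n}(t;\ell)=y(t)+v(t;\ell)$, average, and apply the law of large numbers to the noise term. The only additions are points of rigor the paper leaves implicit (the noiseless trajectory being common to all experiments, independence across datasets, and the finite-horizon union over $t$), which refine rather than change the argument.
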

	\begin{proof}
		Based on \eqref{eq:noisy_y}, each realization $l$ of the output at time $t$ corresponds to
		\begin{equation*}
			y^{n}(t;l)=y(t)+v(t;l),
		\end{equation*}
		where $y(t)$ is the noiseless output and $v(t;l)$ is the $l$-th realization of the measurement noise, for $l=1,\ldots,L$. By replacing this definition into the samples mean on the left-hand-side of \eqref{eq:lln}, we obtain
		\begin{equation*}
			\frac{1}{L} \sum_{\ell=1}^{L} y^{n}(t;l)=\frac{1}{L} \sum_{\ell=1}^{L} (y(t)+v(t;l))=y(t)+\frac{1}{L}\sum_{l=1}^{L} v(t;l).
		\end{equation*}
		Because of the assumptions on the measurement noise, from the law of large numbers it straightforwardly follows that
		\begin{equation*}
			\lim_{L \rightarrow \infty} \frac{1}{L} v(t;l)=0,
		\end{equation*} 
		resulting in the asymptotic result in \eqref{eq:lln}.
	\end{proof}
	
	Under the assumption that $L$ experiments can be performed on the system by applying to it the same persistently exciting input sequence, we thus construct the \emph{averaged} dataset $\bar{\mathcal{D}}_{T}^{L}=\{\mathcal{U}_{T},\bar{\mathcal{Y}}_{T}^{L}\}$, with $\bar{\mathcal{Y}}_{T}^{L}=\{\bar{y}^{L}(t)\}_{t=0}^{T}$ comprising the average outputs
	\begin{equation}\label{eq:avg_out}
		\bar{y}^{L}(t)=\frac{1}{L}\sum_{l=1}^{L} y^{n}(t;l),~~\forall t=0,\ldots,T.
	\end{equation} 
	This dataset is at the core of the following asymptotic equivalent result.
	\begin{theorem}[Model/Noisy data equivalence]\label{th:4}
		Let the assumptions of Lemma~\ref{lemma:5} hold. Consider the averaged set $\bar{\mathcal{D}}_{T}^{L}=\{\mathcal{U}_{T},\bar{\mathcal{Y}}_{T}^{L}\}$ and let $\bar{X}_{0,T}$ and $\bar{X}_{1,T}$ be the following collection of averaged states
		\begin{subequations}
			\label{eq:avg_datamatrices}
			\begin{align}
				& \bar{X}_{0,T}=[\bar{x}(1),...,\bar{x}(T-1)],\\
				& \bar{X}_{1,T}=[\bar{x}(2),...,\bar{x}(T)].
			\end{align}
		\end{subequations}
		For $L \rightarrow \infty$, the noisy predictive controller obtained by explicitly solving \eqref{eq:DDQP} with $\bar{X}_{0,T}$ and $\bar{X}_{1,T}$ respectively replacing $X_{0,T}$ and $X_{1,T}$ in \eqref{eq:DD_rep} converges to the noiseless solution in \eqref{eq:DDexplicit}. \hfill $\blacksquare$
	\end{theorem}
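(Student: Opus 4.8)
The plan is to push the convergence of the averaged data, provided by Lemma~\ref{lemma:5}, through every step that builds the explicit law, exploiting that each of those steps is a continuous operation on the region of data that actually occurs here. Since the state is fully measured, $y(t)=x(t)$, Lemma~\ref{lemma:5} applied entrywise shows that the averaged samples $\bar x(t)$ built from \eqref{eq:avg_out} converge to the noiseless states $x(t)$ as $L\rightarrow\infty$, so that $\bar X_{0,T}\rightarrow X_{0,T}$ and $\bar X_{1,T}\rightarrow X_{1,T}$ in \eqref{eq:avg_datamatrices}; the input block $U_{0,1,T}$ is noiseless and identical across the $L$ experiments, hence it does not depend on $L$.

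Next I would check that the data-driven representation \eqref{eq:DD_rep}, used with the averaged states, remains well defined and depends continuously on the data near the limit. By Lemma~\ref{lemma:1} the stacked matrix $\bigl[\,U_{0,1,T};\,X_{0,T}\,\bigr]$ has full row rank $n+m$; because rank is lower semicontinuous and bounded above by $n+m$, the perturbed matrix $\bigl[\,U_{0,1,T};\,\bar X_{0,T}\,\bigr]$ also has full row rank for all $L$ large enough, and on the open set of full-row-rank matrices the right inverse $A\mapsto A^{\dagger}=A'(AA')^{-1}$ is continuous. Consequently the map defining \eqref{eq:DD_rep}, and likewise $\xi_d$ in \eqref{eq:xid}, evaluated at the averaged data converges to its noiseless value.

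By Lemma~\ref{lemma:2}, the mp-QP data $H_d,G_d,W_d,S_d$ are assembled from $X_{0,T},X_{1,T},U_{0,1,T}$ together with the fixed horizons, penalties and feedback gain via finitely many matrix sums, products and the single right inverse just discussed; composing with the previous step, the matrices $H_d^{(L)},G_d^{(L)},W_d^{(L)},S_d^{(L)}$ built from $\bar{\mathcal{D}}_T^{L}$ converge to $H_d,G_d,W_d,S_d$, and by the same reasoning that gives $H_d\succ 0$ one also has $H_d^{(L)}\succ 0$ with $(H_d^{(L)})^{-1}\rightarrow H_d^{-1}$ for $L$ large. Crucially, the number of rows of $G_d$ is fixed by $N_c$ and the constraint dimensions only, so the finite list of active-constraint patterns $i=1,\dots,M$ is the same for every $L$ and in the limit. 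Fixing such a pattern $i$: under the hypotheses of Theorem~\ref{th:2} the rows of $\tilde G_{d,i}$ are linearly independent in the limit, which is an open condition, hence they stay independent for $L$ large, $\tilde G_{d,i}^{(L)}(H_d^{(L)})^{-1}(\tilde G_{d,i}^{(L)})'$ is invertible and its inverse converges. Substituting into \eqref{eq:F}--\eqref{eq:G} and the analogous expressions for $\mathcal{H}_{d,i}$, $\mathcal{K}_{d,i}$, continuity of matrix multiplication and inversion yields $\mathcal{F}_{d,i}^{(L)}\rightarrow\mathcal{F}_{d,i}$, $\mathcal{G}_{d,i}^{(L)}\rightarrow\mathcal{G}_{d,i}$, $\mathcal{H}_{d,i}^{(L)}\rightarrow\mathcal{H}_{d,i}$, $\mathcal{K}_{d,i}^{(L)}\rightarrow\mathcal{K}_{d,i}$, and in particular convergence of the first $m$ rows $F_{d,i},g_{d,i}$; since this holds for every region, the PWA law \eqref{eq:DDexplicit} obtained from the averaged data converges, region by region and hence as a map $u(\cdot)$, to the noiseless E-DDPC law, which is the claim.

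The main obstacle is the argument of the second paragraph: one must guarantee that the right inverse in \eqref{eq:DD_rep} does not degenerate, i.e.\ that full row rank of the stacked data matrix survives the vanishing averaged-noise perturbation, since every subsequent continuity step hinges on it; a related bookkeeping point is to verify that $M$ and the list of candidate active-constraint combinations do not vary with $L$, so that "the noisy law converges to the noiseless law" is genuinely well posed rather than a comparison of PWA partitions with different numbers of pieces.
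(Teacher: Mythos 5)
Your proposal is correct and follows the same route as the paper's own proof, which simply asserts that since the averaged dataset asymptotically coincides with the noiseless one (Lemma~\ref{lemma:5}), the data-based matrices and hence the controller converge. Your version is in fact more careful than the paper's: the persistence of full row rank in \eqref{eq:persistenlyexciting} under the vanishing perturbation, the continuity of the right inverse, the preservation of $H_d\succ 0$ and of the linear independence of the active-constraint rows, and the invariance of the active-constraint enumeration are exactly the points the paper leaves implicit under ``straightforwardly follows.''
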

	\begin{proof}
		According to Lemma~\ref{lemma:5}, the averaged dataset and the noiseless one asymptotically coincide. Therefore, for $L\rightarrow \infty$, the data-based matrices used to construct the noisy data-driven explicit controller are equal to the noiseless one, from which the convergence straightforwardly follows.
	\end{proof}
	
	As summarized in Algorithm~\ref{algo2}, we thus propose to perform the same experiment multiple times, average the available data and then run Algorithm~\ref{algo1} by using the averaged dataset. We remark that the larger $L$ is, the more likely the asymptotic result is to hold. It is thus crucial to perform as many experiments as possible, up to the bound typically dictated by practical limitations. Note that, by repeatedly performing the same experiment and then averaging out the measured outputs, we preserve the characteristics of the input sequence, thus guaranteeing that condition \eqref{eq:persistenlyexciting} is still verified.   
	
	\begin{algorithm}[!tb]
		\caption{Noisy E-DDPC: offline steps}
		\label{algo2}
		~\textbf{Input}: Sequence $\mathcal{U}_{T}$; number of experiments $L\!\geq\!1$; penalties $Q, P \succeq 0$; $R \succ 0$; horizons $N_{x}, N_{u}, N_{c}\!>\!0$; constraints $\mathcal{C}_{x},\mathcal{C}_{u}$; feedback gain~$K$.
		\vspace*{.1cm}\hrule\vspace*{.1cm}
		\begin{enumerate}[label=\arabic*., ref=\theenumi{}]
			\item \textbf{Perform} $L$ experiments, by feeding $\mathcal{S}$ with $\mathcal{U}_{T}$.
			\item \textbf{Store} the outputs $\{y^{n}(t;l)\}_{t=0}^{T}$, for $l=1,\ldots,L$.
			\item \textbf{Build} the averaged dataset $\bar{\mathcal{D}}_{T}^{L}$ according to \eqref{eq:avg_out}.
			\item \textbf{Run} Algorithm~\ref{algo1} by exploiting $\bar{\mathcal{D}}_{T}^{L}$.
		\end{enumerate}
		\vspace*{.1cm}\hrule\vspace*{.1cm}
		~\textbf{Output}: Noisy explicit law $u^{n}(x)$.
	\end{algorithm}
	
	\section{Simulation examples}\label{sec:examples}
	The performance of E-DDPC are now assessed on three benchmark simulation examples: two numerical case studies of regulation to zero, i.e., the open-loop stable system of \cite{Bemporad2002b} and the sparse unstable system of \cite{Dean2020,Breschi2021}, and a more realistic tracking application, namely the altitude control of a quadcopter. 
	In the last two examples, the data are collected in closed-loop, assuming the systems to be preliminarily stabilized by an existing (unknown) controller. For simplicity, we impose $N_{x}=N_{u}=N_{c}=N$. All computations are carried out on an Intel Core i7-7700HQ processor, running MATLAB 2019b. 
	\subsection{Open-loop stable system}
	\begin{table*}[!tb]
		\caption{Open-loop stable system example: RMSE$_{\mathcal{O}}$ in \eqref{eq:rmse} (mean $\pm$ standard deviation) \emph{vs} $L$.}\label{tab:OL}
		\centering
		\begin{tabular}{cccccc}
			$L$ & 1 & 5 & 10 & 50 & 100\\
			\hline
			RMSE$_{\mathcal{O}}$ & 0.075 $\pm$ 0.171 & 0.022 $\pm$ 0.016 & 0.020 $\pm$ 0.015 & 0.008 $\pm$ 0.008 & 0.006 $\pm$ 0.005\\
			\hline 
		\end{tabular}
	\end{table*}
	\begin{figure}[!tb]
		\centering
		\begin{tabular}{c}
			\subfigure[State trajectories]{\includegraphics[scale=0.8,trim=2cm .8cm 4cm 1cm,clip]{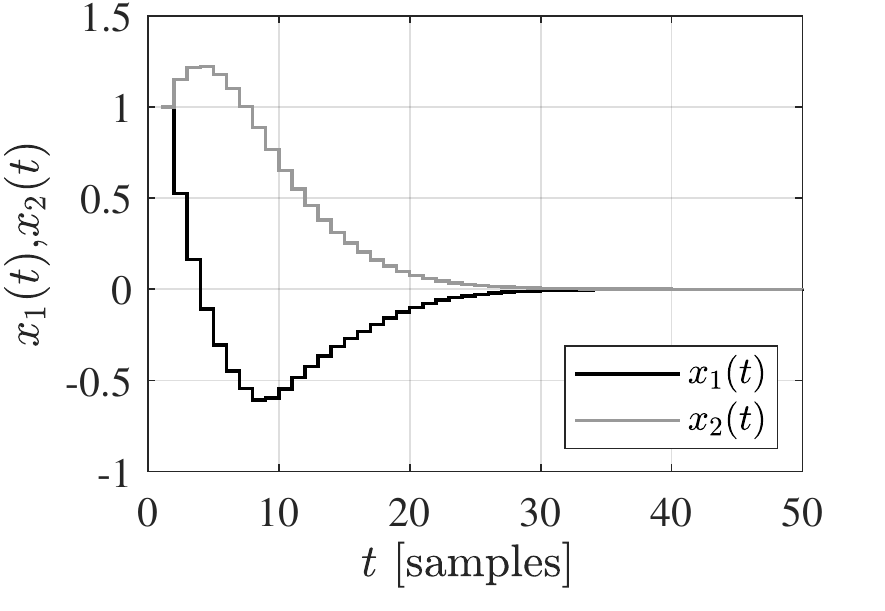}}\\ \subfigure[Input]{\includegraphics[scale=0.8,trim=2cm .8cm 4cm 1cm,clip]{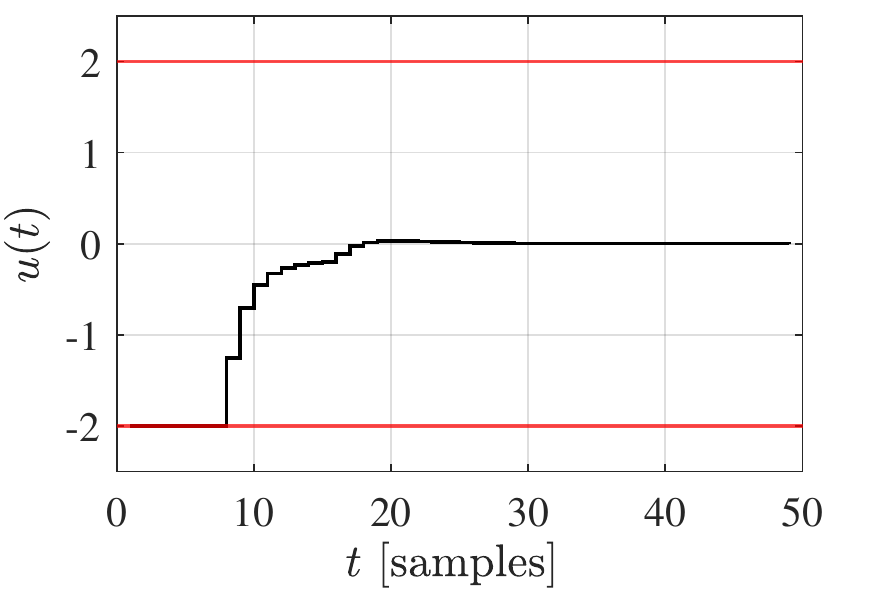}}
		\end{tabular}
		\caption{Open-loop stable system example: state and input trajectories obtained with E-DDPC.}\label{fig:OL}
	\end{figure}
	\begin{figure}[!tb]
		\includegraphics[scale=.5,trim=2cm 9.5cm 2cm 9.5cm,clip]{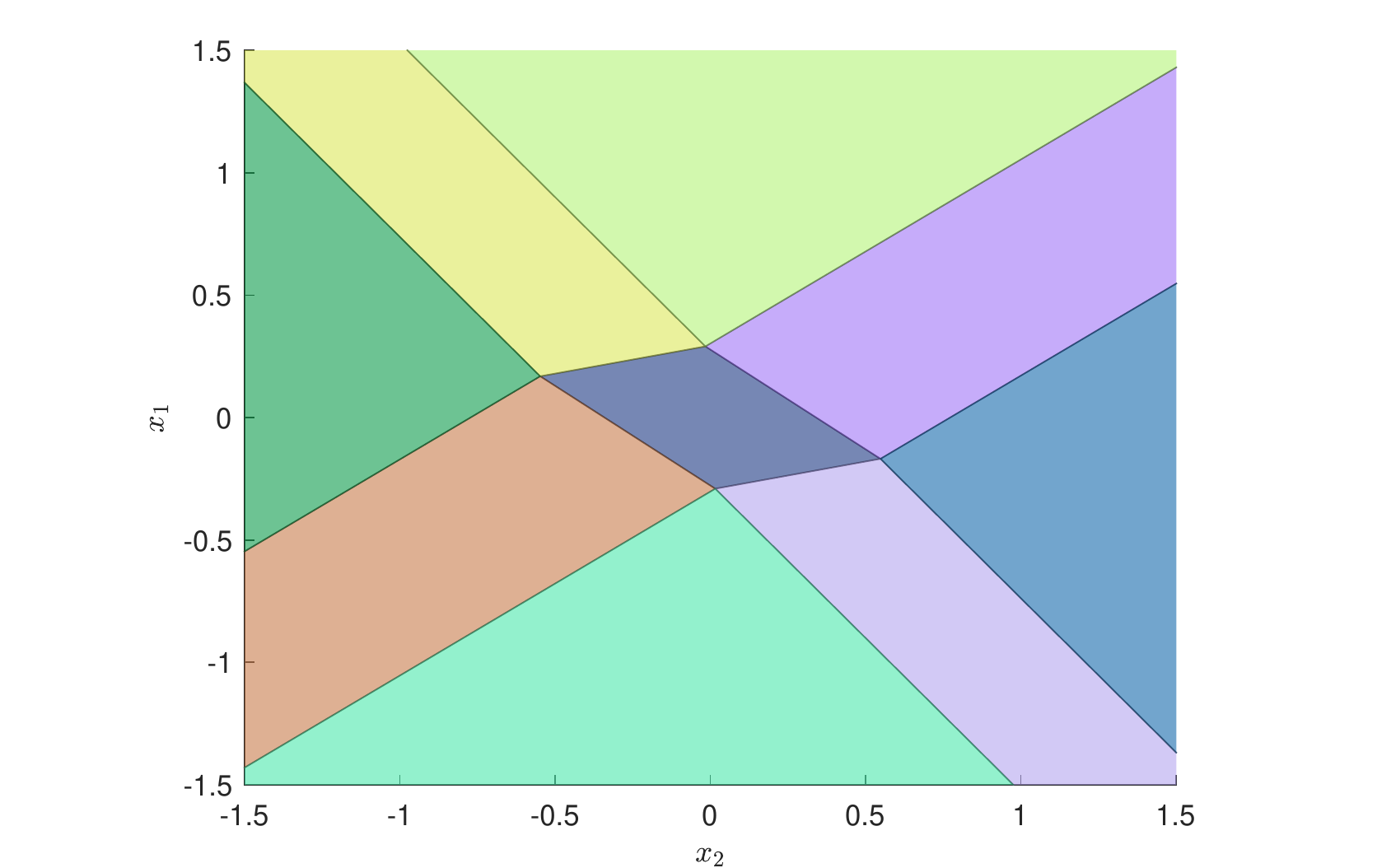}
		\caption{Open-loop stable system example: polyhedral partition of the explicit data-driven law, plotted with the Hybrid Toolbox~\cite{HybTBX}.\label{fig:OL_partition}}
	\end{figure}
	Consider the system introduced in \cite{Bemporad2002b}, the dynamics of which is characterized by the following system of difference equations:
	\begin{equation}
		x(t+1)=\begin{bmatrix}
			0.7326 & -0.0861\\
			0.1722 & 0.9909
		\end{bmatrix}x(t)+\begin{bmatrix}
			0.0609\\
			0.0064
		\end{bmatrix}u(t),
	\end{equation}
	where we assume the state to be fully measurable. $L$ experiments are carried out by always exciting the system with a random input sequence of length $T=20$, uniformly distributed within the interval $[-5,5]$, while the output is assumed to be corrupted by a zero-mean white noise sequence with standard deviation 0.024\footnote{This corresponds to a Signal-to-Noise Ratio (SNR) around $20$ dB, averaged with respect to the $L$ dataset and the two output components.}. As in \cite{Bemporad2002b}, our task is to steer the system's state to the origin, while satisfying the following input constraint
	\begin{equation*}
		-2 \leq u(k) \leq 2.
	\end{equation*}
	The cost of the optimal control problem is characterized by the penalties $Q=I_{2}$ and $R=0.01$, while the terminal weight $P$ is selected by solving the data-driven Lyapunov equation in \eqref{eq:Lyapunov1}. Nonetheless, differently from \cite{Bemporad2002b}, we assume the state, control and constraint horizon to be equal, setting all of them to $N=2$. 
	
	E-DDPC is designed for increasing values of $L$ by carrying out a Monte Carlo analysis with $20$ different realizations of the input used to construct the $L$ datasets and the corresponding measurement noise. This allows us to assess the robustness of the approach to different realizations of the persistently exciting input fed to the system and the effectiveness of the proposed noise management strategy. The results of this Monte Carlo analysis are shown in \tablename{~\ref{tab:OL}}, where the quality of the attained closed-loop performance are assessed over a noiseless test by looking at the following indicator: 
	\begin{equation}
		\mbox{RMSE}_{\mathcal{O}}=\frac{1}{n}\sum_{i=1}^{n}\sqrt{\frac{1}{T_{v}} \sum_{t=0}^{T_{v}-1}(x_{i}(t)-x_{i}^{\star}(t))^{2}}, \label{eq:rmse}
	\end{equation}
	with $n=2$, which allows us to compare the obtained state trajectory with the ideal one $\{x^{\star}(t)\}_{t=0}^{T_{v}-1}$, retrieved by using the \emph{oracle} explicit MPC law $\mathcal{O}$, i.e., using the real model of the system as in \cite{Bemporad2002b}. Clearly, augmenting the number of experiments $L$ used to construct the averaged dataset leads to state trajectory that increasingly matches (on average) the one resulting from the application of the oracle explicit MPC, with a corresponding reduction in the standard deviation of the results obtained over the $20$ realizations of the datasets. The state trajectories, input and partition associated to E-DDPC obtained for $L=50$ are respectively shown in \figurename{s~\ref{fig:OL}-\ref{fig:OL_partition}}. Both the behavior of the system and the obtained partition are almost identical to the ones shown in \cite{Bemporad2002b}, as expected from the theoretical results of Section~\ref{sec:E-DDPC}, with the partition characterizing the explicit law being characterized by $9$ polyhedral regions.   
	
	\subsection{Sparse unstable system}
	\begin{table}[!tb]
		\caption{Sparse unstable system example: RMSE$_{\mathcal{O}}$ \eqref{eq:rmse} for increasing noise levels.} \label{tab:1}
		\centering
		\begin{tabular}{cccccc}
			$\overline{\mbox{SNR}}$ [dB] \hspace*{-.2cm}&\hspace*{-.2cm} 40 \hspace*{-.2cm}&\hspace*{-.2cm} 30 \hspace*{-.2cm}&\hspace*{-.2cm} 19.9 \hspace*{-.2cm}&\hspace*{-.2cm} 10 \hspace*{-.2cm}&\hspace*{-.2cm} 4.6\\ 
			\hline
			RMSE$_{\mathcal{O}}$ \hspace*{-.2cm}&\hspace*{-.2cm} 6.4$\cdot 10^{-5}$ \hspace*{-.2cm}&\hspace*{-.2cm} 3.1$\cdot 10^{-4}$ \hspace*{-.2cm}&\hspace*{-.2cm} 1.1$\cdot 10^{-3}$ \hspace*{-.2cm}&\hspace*{-.2cm} 4.9$\cdot 10^{-3}$ \hspace*{-.2cm}&\hspace*{-.2cm} 1.9$\cdot 10^{-2}$ \\ 
			\hline
		\end{tabular}
	\end{table}
	\begin{figure}[!tb]
		\centering
		\includegraphics[scale=0.8,trim=1.25cm 1cm 0cm 1cm,clip]{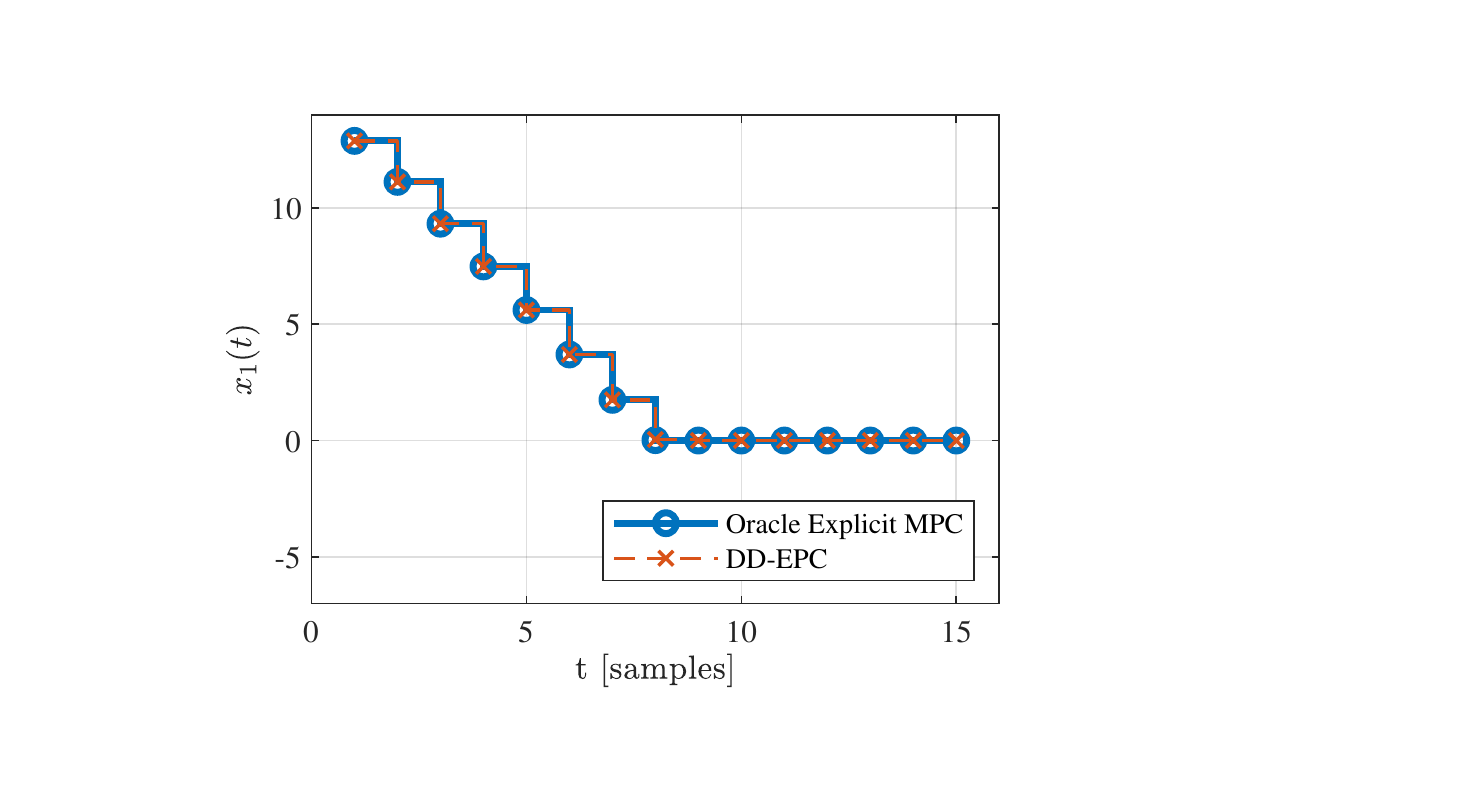}
		\caption{Sparse unstable example: evolution of the first component of the state, E-DDPC \emph{vs} Oracle explicit MPC $\mathcal{O}$. The two trajectory are almost always overlapped.}\label{fig:ex1_statecomp}
	\end{figure}
	Let the linear \emph{multi-input multi-output} (MIMO) data-generating system be characterized by the state-space equations in \eqref{eq:system}, with
	\begin{equation}
		A=
		\begin{bmatrix}
			1.01 & 0.01 & 0\\
			0.01& 1.01 & 0.01\\
			0 & 0.01 & 1.01
		\end{bmatrix},~~
		B=I_{n}.
	\end{equation}
	The explicit predictive law for this system is designed by setting $N=3$, $R=0.01 I_{3}$ and $Q=P=I_{3}$, while imposing only the following box constraints on the inputs:
	\begin{equation}
		-2 \leq u_{i}(k) \leq 2,~~k=0,1,2,~~i=1,2,3.
	\end{equation}
	To retrieve the E-DDPC, we collect $L=10$ datasets of length $T=200$ by stabilizing the system with the static law introduced in \cite{Breschi2021}, namely
	\begin{equation*}
		u(t)=-I_{3}x(t)+I_{3}r(t),
	\end{equation*} 
	selecting $r(t)$ uniformly at random within the interval $[-5,10]$, so as to guarantee that the input fed to the plant is persistently exciting according to Definition~\ref{def:persistentlyexciting}. The measured output is then corrupted by zero mean white noise with variance $\Sigma$, whose effect on the data is evaluated through the average Signal-to-Noise Ratio (SNR) over the three output channels and the $L$ datasets, \emph{i.e.,}
	\begin{equation*}
		\overline{\mbox{SNR}}\!=\!\frac{1}{3L} \sum_{i=1}^{3} \sum_{l=1}^{L} 10\log{\frac{\sum_{t=0}^{T}(x_{i}(t;l)\!-\!v_{i}(t;l))^{2}}{\sum_{t=0}^{T} v_{i}(t;l)^{2}}},~\mbox{[dB]}.
	\end{equation*}
	The available $2000$ samples are used to construct the averaged dataset according to \eqref{eq:avg_out}, so as to handle the noise via the strategy proposed in Section~\ref{sec:handlenoise}. By focusing on a test of length $T_{v}=15$ samples, we assess the performance of E-DDPC with the noise management approach for increasing levels of noise. This evaluation is performed by considering two performance indicators, namely the one in \eqref{eq:rmse} and 
	\begin{align}
		& \mbox{RMSE}_{0}=\frac{1}{3}\sum_{i=1}^{3}\sqrt{\frac{1}{T_{v}} \sum_{t=0}^{T_{v}-1} x_{i}(t)^{2}},
	\end{align}  
	with the latter allowing us to assess (on average) the capability of E-DDPC to bring the states of the system to the origin. As shown in \tablename{~\ref{tab:1}}, the proposed noise handling strategy allows us reproduce quite tightly the trajectory resulting from using the explicit MPC computed as in \cite{Bemporad2002b}, with the result becoming relatively sensitive to noise only for an average SNR lower than $10$~dB. Instead, the degradation on the regulation performance due to noise is almost negligible, since RMSE$_{0} \approx 5.5$ independently from the features of the noise affecting the set used to design E-DDPC. Note that the latter result is mainly due to the randomly chosen initial state considered in the tests\footnote{$x(0)=\begin{bmatrix}
			12.88 & 10.95 & -14.44
		\end{bmatrix}'$.} and, thus, to the initial transient of the state trajectories. These conclusions are further supported by the comparison shown in \figurename{~\ref{fig:ex1_statecomp}}, obtained when the training set is corrupted by noise yielding an average SNR of $20$~dB\footnote{In this case, the resulting partition comprises $791$ regions.}. As expected, the differences between the state trajectories resulting from the use of the data-driven explicit predictive controller and the model-based ones turn out to be negligible\footnote{For the sake of visualization, we solely show the trajectory of the first state, since it reflect the behaviors of the remaining states.}, thus confirming the effectiveness of E-DDPC and the proposed noise handling strategy.  
	
	\subsection{Altitude control}
	\begin{table}
		\caption{Altitude control example: parameters of the quadcopter and their physical meaning.\label{tab:2}}
		\centering
		\begin{tabular}{ccc}
			\textbf{Name \& symbol} & \textbf{Value} & \textbf{m.u.}\\
			\hline
			Mass, $m$ & 0.5 & kg \\
			\hline
			Inertia on x, $I_{x}$ & $5 \cdot 10^{-3}$ & Nms$^2$\\
			\hline
			Inertia on y, $I_{y}$ & $5 \cdot 10^{-3}$ & Nms$^2$\\
			\hline 
			Inertia on z, $I_{z}$ & $9 \cdot 10^{-3}$ & Nms$^2$\\
			\hline
			Motor inertia, $J_{m}$ & $3.4 \cdot 10^{-5}$ & Nms$^2$\\
			\hline
			Drag factor, $d$ & $1.1 \cdot 10^{-5}$ & Nms$^2$\\
			\hline
			Thrust factor, $b$ & $7.2 \cdot 10^{-5}$ & Ns$^2$\\
			\hline
			Center-to-propeller distance, $l$ & $0.25$ & m\\
			\hline
			Gravitational acceleration, $g$ & 9.81 & m/s$^2$\\
			\hline
		\end{tabular}
	\end{table}
	\begin{table}
		\caption{Altitude control example: parameters of the controllers in \eqref{eq:PIDcontrolquadrotor}.}\label{tab:3}
		\centering
		\begin{tabular}{cccccccc}
			$k_{1}$ & $k_{2}$ & $k_{3}$ & $k_{4}$ & $k_{5}$ & $k_{6}$ & $k_{7}$ & $k_{8}$\\
			\hline
			4 & 2 & 4 & 2 & 4 & 2 & 2 & 2\\
			\hline
		\end{tabular}
	\end{table}
	\begin{figure}[!tb]
		\centering
		\begin{tabular}{c}
			\subfigure[Takeoff]{\includegraphics[scale=.8,trim=1.85cm .8cm 4cm 1cm,clip]{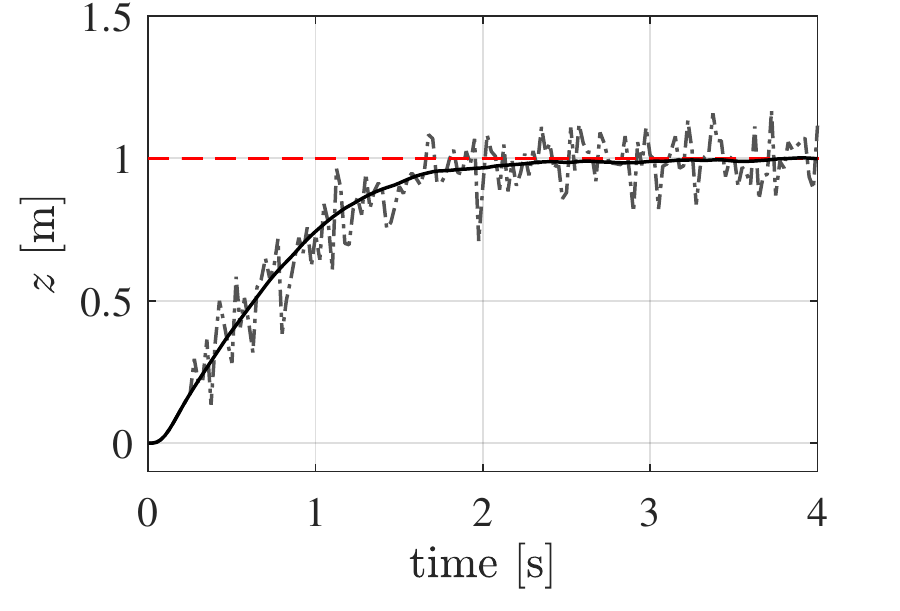}}\\
			\subfigure[Landing]{\includegraphics[scale=.8,trim=1.85cm .8cm 4cm 1cm,clip]{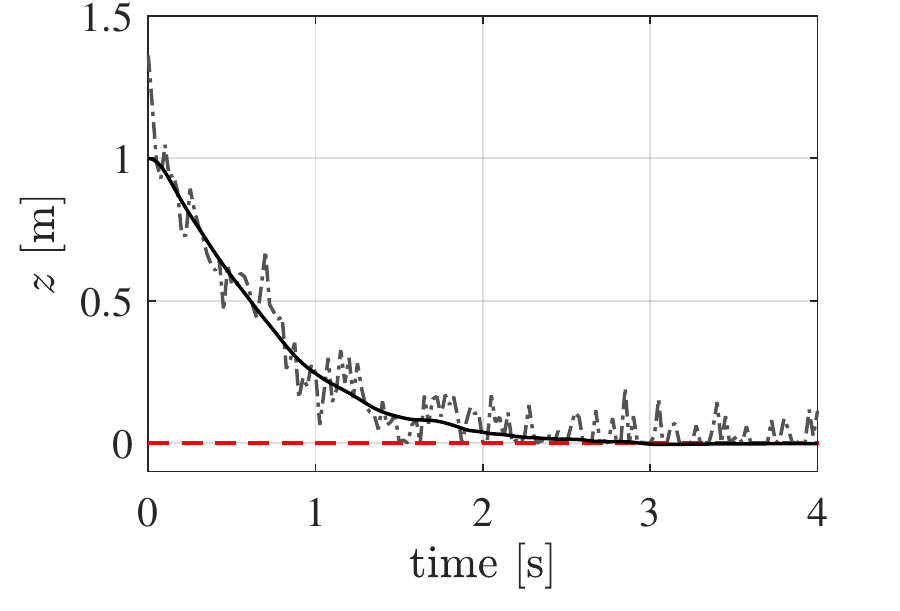}}
		\end{tabular}
		\caption{Altitude control example: measured (dotted-dashed gray line) and actual (black line) altitude \emph{vs} reference (dashed red line).}\label{Fig:noisy_testQUAD}
	\end{figure}
	We now consider the problem of controlling the altitude of a quadcopter, by considering the same dynamical model proposed in \cite{Formentin2011} to perform the data-collection experiments and to assess the effectiveness of the altitude E-DDPC. The data-generating system is thus described by the following equations
	\begin{subequations} \label{eq:quadrotormodel}
		\begin{equation}
			\begin{aligned}
				\ddot{x} &= U_{1}{\frac{(\cos\psi\cos\phi\sin\theta+\sin\psi\sin\phi)}{m}}, \\
				\ddot{y} &= U_{1}{\frac{(\sin\psi\cos\phi\sin\theta-\sin\phi\cos\psi)}{m}},\\
				\ddot{z} &= U_{1}{\frac{(\cos\theta\cos\phi)}{m}}-g,  
			\end{aligned}
		\end{equation}
		\begin{equation}
			\begin{aligned}
				\dot{p} &= {\frac{(I_{y}-I_{z})}{I_{x}}}qr+{\frac{1}{I_{x}}}U_{2}-{\frac{J_m}{I_{x}}}q~\Omega_{R}, \\
				\dot{q} &= {\frac{(I_{z}-I_{x})}{I_{y}}}pr+{\frac{1}{I_{y}}}U_{3}+{\frac{J_m}{I_{y}}}p~\Omega_{R}, \\
				\dot{r} &= {\frac{(I_{x}-I_{y})}{I_{z}}}pq+{\frac{d}{I_{z}}}U_{4}, 
			\end{aligned}
		\end{equation}
		\begin{equation}
			\begin{aligned}
				\dot{\phi} &= p+\sin(\phi)\tan(\theta)q+\cos(\phi)\tan(\theta)r, \\
				\dot{\theta} &= \cos(\phi)q-\sin(\phi)r, \\
				\dot{\psi} &= {\frac{\sin(\phi)}{\cos(\theta)}}q+{\frac{\cos(\phi)}{\cos(\theta)}}r,
			\end{aligned}
		\end{equation}
	\end{subequations}
	where $(x,y,z)$~[m] denote the position of the quadrotor center of mass with respect to the earth inertial reference frame, $(\phi,\theta,\psi)$~[deg] are the Euler angles indicating the orientation of the quadcopter with respect to the same reference frame, and $(p,q,r)$~[deg/s] are the associated attitude velocities. By denoting the motors angular rates as $\Omega_{i}$, $i=1,\ldots,4$, the four inputs in \eqref{eq:quadrotormodel} are defined so as to be linear in the control variables, \emph{i.e.,}
	\begin{equation}\label{eq:quad_inputs}
		\begin{aligned}
			U_{1} &= b\sum_{i=1}^{4}\Omega_{i}^{2},\\ 
			U_{2} &= bl(\Omega_{4}^{2}-\Omega_{2}^{2}),\\ 
			U_{3} &= bl(\Omega_{3}^{2}-\Omega_{1}^{2}),\\
			U_{4} &= d(-\Omega_{1}^{2}+\Omega_{2}^{2}-\Omega_{3}^{2}+\Omega_{4}^{2}),
		\end{aligned}
	\end{equation}
	while $\Omega_{R}=-\Omega_{1}+\Omega_{2}-\Omega_{3}+\Omega_{4}$ and the remaining parameters are reported, along with their physical meaning, in \tablename{~\ref{tab:2}}.
	
	The data-collection phase is carried out in closed-loop, by controlling the position and attitude of the quadrotor with the controller proposed in \cite{Formentin2011} and three \emph{proportional derivative} (PD) controllers, respectively, \emph{i.e.,}
	\begin{subequations}
		\label{eq:PIDcontrolquadrotor}
		\begin{align}        
			U_{1} &= \frac{mg-k_1(z-\hat{z})-k_2\dot{z}}{\cos{\phi}\cos{\theta}},\label{eq:flatnessbased}\\ 
			U_{2} &= -I_{x}(k_3(\phi-\hat{\phi})+k_4\dot{\phi}),\label{eq:PD1}\\ 
			U_{3} &= -I_{y}(k_5(\theta-\hat{\theta})+k_6\dot{\theta}),\label{eq:PD2}\\
			U_{4} &= -I_{z}(k_7(\psi-\hat{\psi})+k_8\dot{\psi})\label{eq:PD3}.
		\end{align}
	\end{subequations}
	whose parameters are reported in \tablename{~\ref{tab:3}}. By using these controllers, which are assumed to be unknown throughout the design of the predictive controller, we perform $L=10$ experiments of length $10$~s, corresponding to subsets comprising $T=400$ samples, since the sampling time is set to $T_{s}=0.025$~s. For the system to lie within the framework considered in the paper, the PD controllers are exploited to keep $\theta$ and $\phi$ close to zero both in the data-collection\footnote{When gathering data, the set point for both angles are selected as slowly varying signals, randomly generated within $[-0.2,0.2]$ [deg]. This choice allows us to retain information on possible configurations in which the attitude angles are not exactly zero.} and testing phases, thus allowing us to decouple the altitude dynamics from the one of the other variables in \eqref{eq:quadrotormodel} and to set the problem into the framework considered in the paper. Instead, since we aim at exploiting the E-DDPC to replace the controller in \eqref{eq:flatnessbased}, the closed-loop is fed with a piecewise-constant reference for the altitude, which is randomly generated within $[0,4]$~[m] to guarantee that the altitude dynamics is persistently excited. Both the height $z$~[m] and the vertical velocity $\dot{z}$~[m/s] of the quadrotor are assumed to be measured, with the available measurements corrupted by white zero-mean noise yielding an average SNR approximately equal to $35$~[dB] over the two channels and experiments. 
	
	Based on our choices, we can focus on designing the E-DDPC attitude controller via the available noisy measurements only, which are averaged prior to the actual control design phase to exploit the strategy presented in Section~\ref{sec:handlenoise}. The design is performed by pre-compensating the gravitational force, namely we introduce
	\begin{equation}\label{eq:precomp}
		u_{1}=\frac{U_{1}}{m}-g,
	\end{equation}   
	which is the actual variable of the predictive control problem. By considering the tracking formulation in \eqref{eq:trackingDDPC}, E-DDPC is retrieved for $Q=diag([1,0])$, thus solely penalizing altitude tracking, $R=0.01$, $P=diag([100,100])$, and setting $N=5$. To avoid crashes, the altitude is forced to be grater than or equal to zero, while the input $u_{1}$ is forced to lie within the following interval
	\begin{equation}\label{eq:constrQUAD}
		-9.81 \leq u_{1} \leq 9.564,
	\end{equation} 
	where the lower bound correspond to a null action $U_{1}$ and the upper bound is dictated by the features of the motors used to control the quadcopter.
	
	The obtained E-DDPC law is then tested in both take-off and landing maneuvers, while the attitude PD controllers in \eqref{eq:PD1}-\eqref{eq:PD3} are used to track zero roll and pitch references. Specifically, we use the explicit controller to bring the quadcopter to a cruise altitude of $1$~[m] in the first case, while we exploit it to return to the ground from such an altitude in the second case. \figurename{~\ref{Fig:noisy_testQUAD}} shows the results of the two test, performed when the measured altitude and vertical velocity are corrupted by noise with the same intensity considered in the data generation phase. Despite the noise acting on the data that guide the selected control action at each time step, both maneuvers are successfully performed, as proven by the actual trajectories of the quadcopter.          
	
	\subsubsection{Sensitivity to the tuning parameters}
	\begin{table}[!tb]
		\caption{Altitude control example: performance indicators \emph{vs} values of $q_{1}$.}\label{tab:4}
		\centering
		\begin{tabular}{ccccc} 
			$q_{1}$ & $10^{-1}$ & 1 & $10$ & $10^{2}$\\
			\hline 
			${T}_{sett}$ [s] & 1.0 & 1.9 & 2.4 & 2.5 \\
			\hline
			$S_{max}$ [\%] & 1.6 & 0 & 0 & 0\\
			\hline
			$B_{\%}$ [\%] & 1.3 & 1.3 & 2.5 & 3\\
			\hline
		\end{tabular}
	\end{table}
	\begin{table}[!tb]
		\caption{Altitude control example: performance indicators \emph{vs} values of $R$.}\label{tab:5}
		\centering
		\begin{tabular}{cccc} 
			$R$ & $10^{-4}$ & $10^{-3}$ & $10^{-2}$\\
			\hline 
			$T_{sett}$ [s] & 1.4 & 2.4 & 2.4 \\
			\hline
			$S_{max}$ [\%] & 6.6 & 2.6 & 0 \\
			\hline
			$B_{\%}$ [\%] & 7.2 & 4.2 & 1.3 \\
			\hline
		\end{tabular}
	\end{table}
	\begin{figure}[!tb]
		\centering
		\includegraphics[scale=.7,trim=1.5cm .8cm 3cm 1cm,clip]{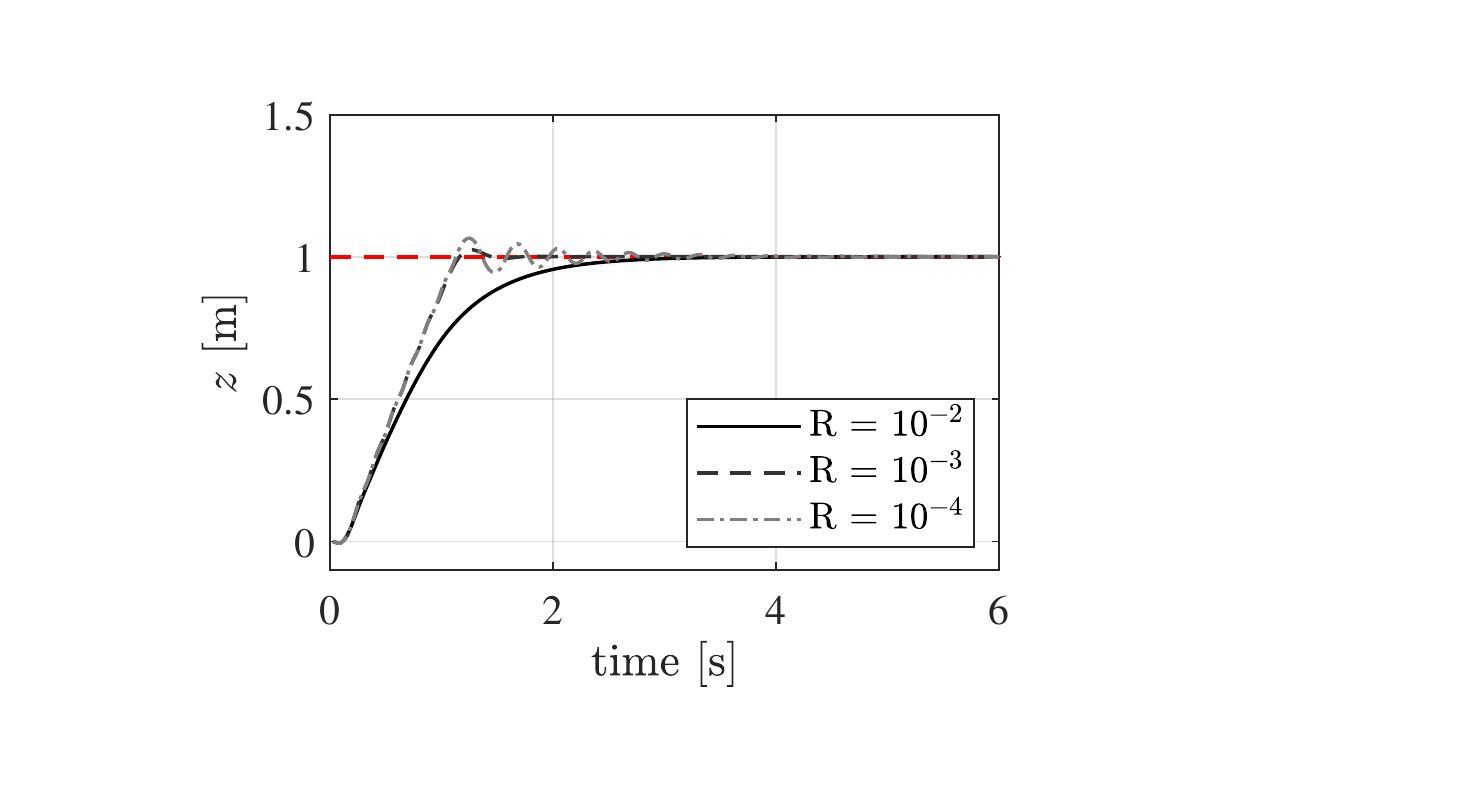}
		\caption{Altitude control example: quadcopter trajectory \emph{vs} $R$. The red dashed line indicates the cruise altitude.}\label{Fig:sensitivity}
	\end{figure}
	We now analyze how the performance of E-DDPC are shaped by different choices of the tunable weights $Q$ and $R$, still fixing the second element in the diagonal of $Q$ equal to zero. For the sake of clarity, we here consider noiseless take-off tests only, and we quantitatively assess the performance attained by the explicit altitude controller by looking at: $(i)$ the settling time $T_{sett}$ [s] at which the cruise altitude is reached; $(ii)$ the maximum overshoot $S_{max}$~[\%] with respect to the target altitude, and $(iii)$ the percentage $B_{\%}$~[\%] of instants over the test horizon for which the control bounds are hit.
	
	Let $q_{1}$ be the element in position $(1,1)$ in the penalty matrix $Q$. As shown in \tablename{~\ref{tab:4}}, when $R=0.01$, the higher the first component, the more the control action is prone to hit the operational bounds dictated by the constraints in \eqref{eq:constrQUAD}. Instead, the lower the weight on the tracking error, the prompter is the tracking, at the price of an overshoot in the transient. Note that the latter is yet rather small. As expected, in take-off, for $q_{1}=1$ and different weights $R$, the lower the weight, the more the bounds in \eqref{eq:constrQUAD} are hit (see \tablename{~\ref{tab:5}}). These saturations are paired with more consistent overshoots in altitude and oscillations around the target, as shown in \figurename{~\ref{Fig:sensitivity}}.  
	\subsubsection{Comparison with model-based solutions}
	\begin{figure}[!tb]
		\centering
		\includegraphics[scale=.7,trim=1cm 1cm 0cm 1cm,clip]{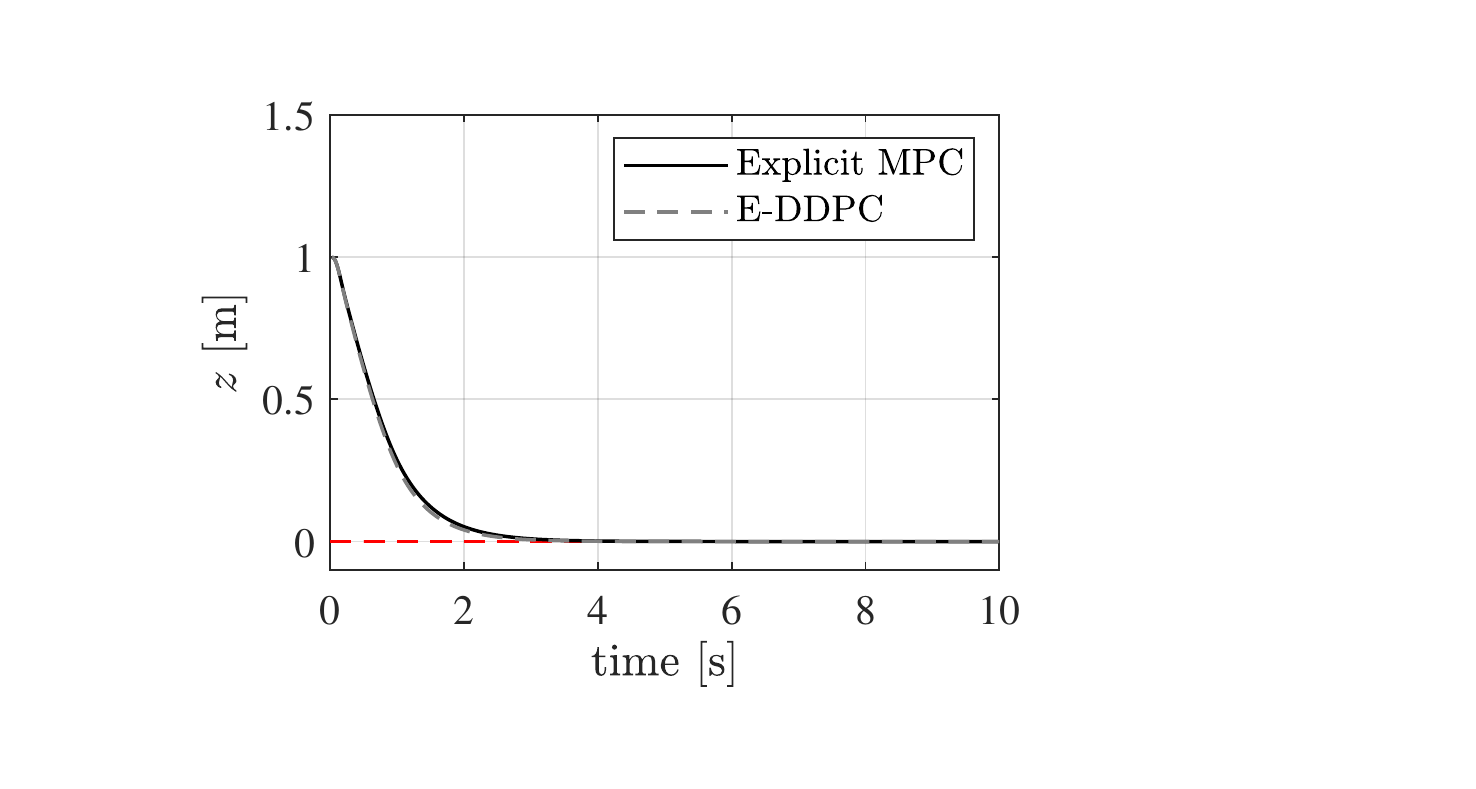}
		\caption{Altitude control example: E-DDPC \emph{vs} explicit MPC with the identified model. The difference between the two solutions is almost negligible over the horizon, except for the slight deviation of the quadrotor trajectory around $2$~s.}\label{fig:compMB}
	\end{figure}
	\begin{figure}[!tb]
		\centering
		\includegraphics[scale=.5,trim=2cm 9.25cm 0cm 10cm,clip]{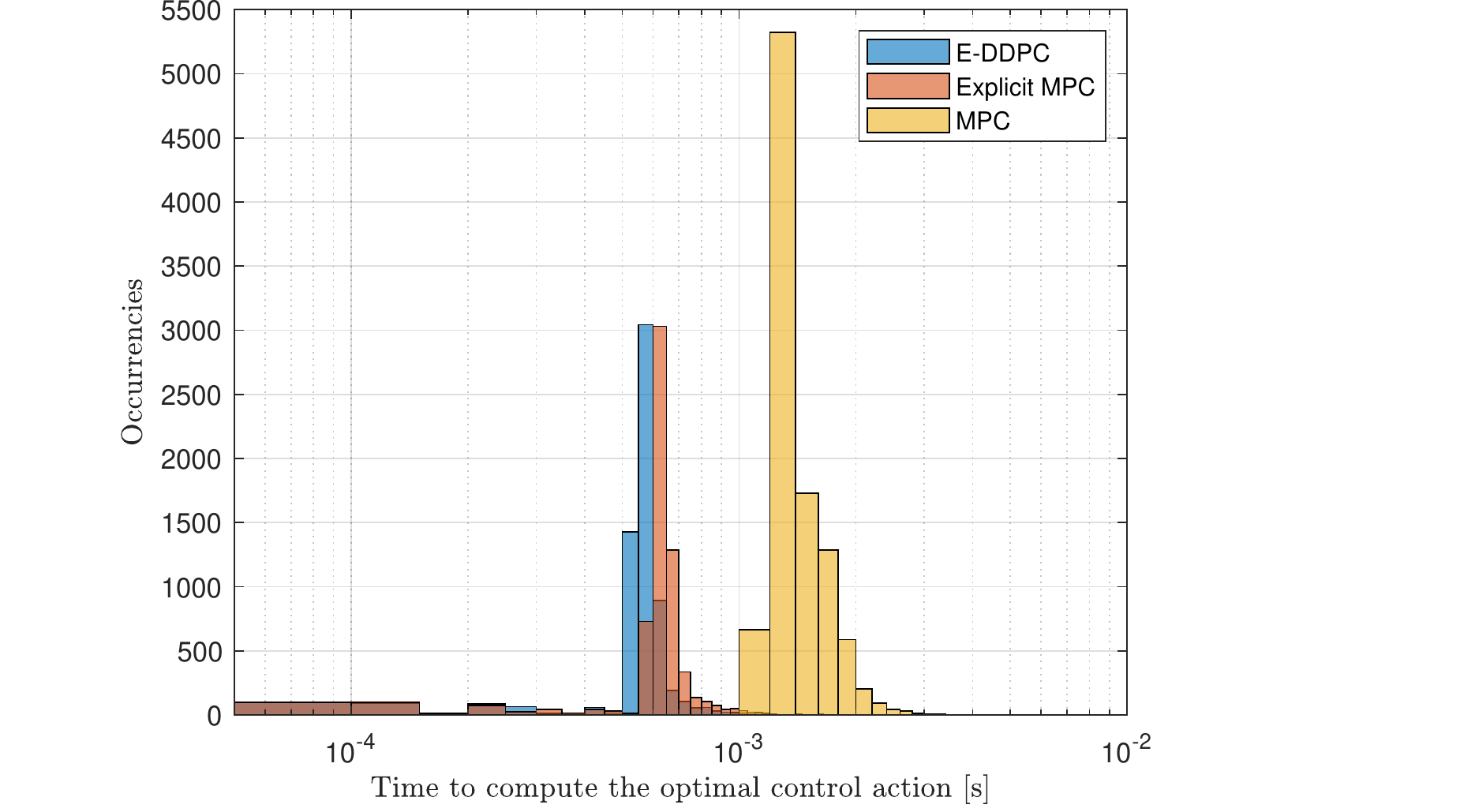}
		\caption{Altitude control example: distributions of the CPU times needed to retrieve the control action at each time step with E-DDPC, explicit and implicit MPC. As expected, the explicit solutions are more computationally efficient than the implicit one.}\label{fig:comparisonCPU}
	\end{figure}
	\begin{table}[!tb]
		\caption{Altitude control: E-DDPC \emph{vs} implicit and explicit MPC. CPU times, storage requirements and regions of the PWA laws.}\label{tab:compCPU}
		\centering
		\begin{tabular}{cccc}
			\multicolumn{1}{c}{} \hspace*{-.2cm}&\hspace*{-.2cm} Implicit MPC \hspace*{-.2cm}&\hspace*{-.2cm} Explicit MPC \hspace*{-.2cm}&\hspace*{-.2cm} E-DDPC\\
			\hline
			$\#$ regions & - & 723 & 736\\
			\hline 
			$\bar{T}$~[s] & 1.5 $\cdot 10^{-3}$ & 0.5 $\cdot 10^{-3}$ & 0.5 $\cdot 10^{-3}$\\
			\hline 
			$T_{wc}$~[s] & 82 $\cdot 10^{-3}$ & 1.3 $\cdot 10^{-3}$ & 1.4 $\cdot 10^{-3}$\\
			\hline 
			Storage [kB] & 1.4 & 570 & 586 \\
			\hline
		\end{tabular}
	\end{table}
	The performance attained with E-DDPC are eventually compared with the ones obtained by designing an explicit MPC, with a model retrieved with the n4SID method \cite{VanOverschee1994}. To train the model, we consider one dataset of length $T=4000$ samples, with the same features as the ones used to design the data-driven controller. This choice yields a fair comparison of the two control laws, since the dimensions of the dataset and their characteristics are the same. The comparative tests are performed within a noiseless scenario. As shown in \figurename{~\ref{fig:compMB}} for a landing test, both the controllers result into the same behavior of the quadcopter. We remark that this equivalent response comes at the price of an additional effort required when designing the explicit MPC law, due to the identification phase needed to retrieve the model of the quadcopter. 
	
	Differences between the two solutions arise when comparing them in terms of the time required to compute the control action to be applied at each instant. Indeed, as reported in \figurename{~\ref{fig:comparisonCPU}}, less exploration time is generally needed when considering E-DDPC over $100$ tests performed for randomly generated initial conditions. As expected, the time required for an implicit MPC (designed with the identified model) to compute the control action is generally higher, since the use of the implicit controller entails the solution of a QP at each time step. These conclusions are further supported by results reported in \tablename{~\ref{tab:compCPU}}, indicating that both the explicit solutions considerably reduce the computational load to determine the control action, at the price of an increase in the on-board memory required to store the explicit laws rather than the matrices of the implicit control problem only. Note that, when considering the explicit controllers, the control action is always found within the considered sampling time (see the average time $\bar{T}$~[s] and the worst case time $T_{wc}$~[s] reported in \tablename{~\ref{tab:compCPU}}). Instead, in some tests (not shown in \figurename{~\ref{fig:comparisonCPU}}), the time required for the implicit MPC to find the control action even exceeds the CPU time dictated by $T_{s}$.

	\section{Conclusions}\label{sec:conclusions}
	In this paper, we have derived an explicit data-driven predictive control (E-DDPC) solution. Our formulation relies on the Willems' fundamental lemma, leading to a fully data-driven piecewise affine law, that is designed so as to optimize a quadratic performance index and satisfy a set of user-defined constraints. To account for the pervasive presence of noise in real data, we further propose a noise handling strategy, the effectiveness of which has been assessed on three numerical examples.
	
	Future research will be devoted to generalize E-DDPC to a purely input/output setting, and to extend it to nonlinear systems. Moreover, alternative strategies to merge the regions of the explicit law and to manage noise will be investigated.
	
	%\nocite{*}
	\bibliographystyle{plain}
	\bibliography{EDDPC} 

\begin{thebibliography}{10}

\bibitem{Alessio2009}
A.~Alessio and A.~Bemporad.
\newblock {\em A Survey on Explicit Model Predictive Control}, pages 345--369.
\newblock Springer Berlin Heidelberg, Berlin, Heidelberg, 2009.

\bibitem{ASWANI2013}
A.~Aswani, H.o Gonzalez, S.~S. Sastry, and C.~Tomlin.
\newblock Provably safe and robust learning-based model predictive control.
\newblock {\em Automatica}, 49(5):1216--1226, 2013.

\bibitem{HybTBX}
A.~Bemporad.
\newblock {Hybrid Toolbox - User's Guide}, 2004.
\newblock \url{http://cse.lab.imtlucca.it/~bemporad/hybrid/toolbox}.

\bibitem{Bemporad2002}
A.~Bemporad, F.~Borrelli, and M.~Morari.
\newblock Model predictive control based on linear programming - the explicit
  solution.
\newblock {\em IEEE Transactions on Automatic Control}, 47(12):1974--1985,
  2002.

\bibitem{Bemporad2001}
A.~Bemporad, K.~Fukuda, and F.D. Torrisi.
\newblock Convexity recognition of the union of polyhedra.
\newblock {\em Computational Geometry}, 18(3):141--154, 2001.

\bibitem{Bemporad2002b}
A.~Bemporad, M.~Morari, V.~Dua, and E.N. Pistikopoulos.
\newblock The explicit linear quadratic regulator for constrained systems.
\newblock {\em Automatica}, 38(1):3--20, 2002.

\bibitem{Berberich2020}
J.~Berberich, A.~Koch, C.W. Scherer, and F.~Allg{\"o}wer.
\newblock Robust data-driven state-feedback design.
\newblock In {\em 2020 American Control Conference (ACC)}, pages 1532--1538.
  IEEE, 2020.

\bibitem{Berberich2020b}
J.~Berberich, J.~K{\"o}hler, M.A. M{\"u}ller, and F.~Allg{\"o}wer.
\newblock Data-driven tracking {MPC} for changing setpoints.
\newblock {\em IFAC-PapersOnLine}, 53(2):6923--6930, 2020.
\newblock 21th IFAC World Congress.

\bibitem{Berberich2021}
J.~Berberich, J.~K{\"o}hler, M.A. M{\"u}ller, and F.~Allg{\"o}wer.
\newblock Data-driven model predictive control with stability and robustness
  guarantees.
\newblock {\em IEEE Transactions on Automatic Control}, 66(4):1702--1717, 2021.

\bibitem{Berberich2021b}
J.~Berberich, J.~K{\"o}hler, M.A. M{\"u}ller, and F.~Allg{\"o}wer.
\newblock Linear tracking {MPC} for nonlinear systems part {II}: The
  data-driven case, 2021.

\bibitem{Breschi2021}
V.~Breschi, C.~De~Persis, S.~Formentin, and P.~Tesi.
\newblock Direct data-driven model-reference control with lyapunov stability
  guarantees, 2021.

\bibitem{Cimini2017}
G.~Cimini and A.~Bemporad.
\newblock Exact complexity certification of active-set methods for quadratic
  programming.
\newblock {\em IEEE Transactions on Automatic Control}, 62(12):6094--6109,
  2017.

\bibitem{Cimini2021}
G.~Cimini, D.~Bernardini, S.~Levijoki, and A.~Bemporad.
\newblock Embedded model predictive control with certified real-time
  optimization for synchronous motors.
\newblock {\em IEEE Transactions on Control Systems Technology},
  29(2):893--900, 2021.

\bibitem{coulson2019data}
J.~Coulson, J.~Lygeros, and F.~D{\"o}rfler.
\newblock Data-enabled predictive control: In the shallows of the {DeePC}.
\newblock In {\em 2019 18th European Control Conference (ECC)}, pages 307--312.
  IEEE, 2019.

\bibitem{Coulson2019}
J.~Coulson, J.~Lygeros, and F.~D{\"o}rfler.
\newblock Regularized and distributionally robust data-enabled predictive
  control.
\newblock In {\em 2019 IEEE 58th Conference on Decision and Control (CDC)},
  pages 2696--2701, 2019.

\bibitem{DePersis2019}
C.~{De Persis} and P.~Tesi.
\newblock Formulas for data-driven control: Stabilization, optimality, and
  robustness.
\newblock {\em IEEE Transactions on Automatic Control}, 65(3):909--924, 2019.

\bibitem{DePersis2021}
C.~{De Persis} and P.~Tesi.
\newblock Low-complexity learning of linear quadratic regulators from noisy
  data.
\newblock {\em Automatica}, 128:109548, 2021.

\bibitem{Dean2020}
S.~Dean, H.~Mania, N.~Matni, B.~Recht, and S.~Tu.
\newblock On the sample complexity of the linear quadratic regulator.
\newblock {\em Foundations of Computational Mathematics}, 20:633--679, 2020.

\bibitem{Dorfler2021}
F.~D{\"o}rfler, J.~Coulson, and I.~Markovsky.
\newblock Bridging direct \& indirect data-driven control formulations via
  regularizations and relaxations, 2021.

\bibitem{Faulwasser2021}
T.~Faulwasser, M.A. M{\"u}ller, and K.~Worthmann.
\newblock Recent advances in model predictive control: Theory, algorithms, and
  applications.
\newblock 2021.

\bibitem{formentin2019deterministic}
S.~Formentin, M.C. Campi, A.~Car{\`e}, and S.M. Savaresi.
\newblock Deterministic continuous-time virtual reference feedback tuning
  ({VRFT}) with application to {PID} design.
\newblock {\em Systems \& Control Letters}, 127:25--34, 2019.

\bibitem{Formentin2011}
S.~Formentin and M.~Lovera.
\newblock Flatness-based control of a quadrotor helicopter via feedforward
  linearization.
\newblock In {\em 2011 50th IEEE Conference on Decision and Control and
  European Control Conference}, pages 6171--6176, 2011.

\bibitem{formentin2014comparison}
S.~Formentin, K.~van Heusden, and A.~Karimi.
\newblock A comparison of model-based and data-driven controller tuning.
\newblock {\em International Journal of Adaptive Control and Signal
  Processing}, 28(10):882--897, 2014.

\bibitem{GEVERS2005}
M.~Gevers.
\newblock Identification for control: From the early achievements to the
  revival of experiment design.
\newblock {\em European Journal of Control}, 11(4):335--352, 2005.

\bibitem{Hjalmarsson2002}
H.~Hjalmarsson.
\newblock Iterative feedback tuning—an overview.
\newblock {\em International Journal of Adaptive Control and Signal
  Processing}, 16(5):373--395, 2002.

\bibitem{Hrovat2012}
D.~Hrovat, S.~Di~Cairano, H.E. Tseng, and I.V. Kolmanovsky.
\newblock The development of model predictive control in automotive industry: A
  survey.
\newblock In {\em 2012 IEEE International Conference on Control Applications},
  pages 295--302, 2012.

\bibitem{Ljung1999}
L.~Ljung.
\newblock {\em System Identification (2nd Ed.): Theory for the User}.
\newblock Prentice Hall PTR, USA, 1999.

\bibitem{Matne2014}
D.Q. Mayne.
\newblock Model predictive control: Recent developments and future promise.
\newblock {\em Automatica}, 50(12):2967--2986, 2014.

\bibitem{Patrinos2014}
P.~Patrinos and A.~Bemporad.
\newblock An accelerated dual gradient-projection algorithm for embedded linear
  model predictive control.
\newblock {\em IEEE Transactions on Automatic Control}, 59(1):18--33, 2014.

\bibitem{Rawlings2000}
J.B. Rawlings.
\newblock Tutorial overview of model predictive control.
\newblock {\em IEEE control systems magazine}, 20(3):38--52, 2000.

\bibitem{Rotulo2021online}
M.~Rotulo, C.~De~Persis, and P.~Tesi.
\newblock Online learning of data-driven controllers for unknown switched
  linear systems, 2021.

\bibitem{VanOverschee1994}
P.~{Van Overschee} and B.~{De Moor}.
\newblock {N4SID}: Subspace algorithms for the identification of combined
  deterministic-stochastic systems.
\newblock {\em Automatica}, 30(1):75--93, 1994.

\bibitem{Willems2005}
J.C. Willems, P.~Rapisarda, I.~Markovsky, and B.L.M. {De Moor}.
\newblock A note on persistency of excitation.
\newblock {\em Systems \& Control Letters}, 54(4):325--329, 2005.

\end{thebibliography}
	
	\appendix
	\section{Proof of Lemma~\ref{lemma:2}}\label{appendix1} 
	Let us recast the equation in \eqref{eq:MPCconstr1} as
	\begin{equation}\label{eq:state_compact} 
		x(k+1)=\mathcal{M}\begin{bmatrix}u(k)\\ x(k) \end{bmatrix},
	\end{equation}
	according to which the dynamics of the system over the prediction horizon can be compactly expressed as:
	\begin{equation}\label{eq:state_evol}
		X=\Xi x +\underbrace{\begin{bmatrix}\Gamma\\ \mathbf{0}_{n(N_{x}-N_{u}-1)\times N_{u}m}\end{bmatrix}}_{\bar{\Gamma}} U,
	\end{equation}
	where $X \in \mathbb{R}^{nN_{x}}$ stacks the predicted states $\{x(i)\}_{k=1}^{k=N_{x}}$, $\Gamma\in\mathbb{R}^{nN_{u}\times mN_{u}}$ is defined as
	\begin{equation*}
		\Gamma = 
		\begin{bmatrix}
			\gamma & \mathbf{0}_{n\times m} & \cdots & \cdots & \mathbf{0}_{n\times m} \\
			\xi \gamma & \gamma & \mathbf{0}_{n\times m} & \cdots & \mathbf{0}_{n\times m} \\
			\vdots & \vdots & \vdots & \ddots & \vdots \\
			\xi^{N_{u}-1} \gamma& \xi^{N_{u}-2} \gamma & \cdots & \cdots & \gamma 
		\end{bmatrix},
	\end{equation*}
	and
	$\Xi\in \mathbb{R}^{nN_{x}\times n}$ is
	\begin{equation}
		\Xi = \begin{bmatrix}
			\xi' &
			\ldots &
			\left(\xi^{N_{u}}\right)' & (\xi_{K}^{N_u+1})' & \ldots & (\xi_{K}^{N_x})'
		\end{bmatrix}',
	\end{equation}
	with
	\begin{align}\label{eq:help_matrices}
		\xi= &\mathcal{M}\begin{bmatrix}
			0_{m\times n} \\ I_n
		\end{bmatrix}\!\!,~~~\gamma=\mathcal{M}\begin{bmatrix}
			I_m\\0_{n\times m}
		\end{bmatrix},\\
		&~~~~~~~~~~~~~~\xi_{K}=\xi+K\gamma.
	\end{align}
	By exploiting \eqref{eq:state_compact}, it can also be shown that $H$ in \eqref{eq:QP2} is given by
	\begin{equation} \label{eq:HMB}
		H = \mathcal{R}+\bar{\Gamma}'\mathcal{Q}\bar{\Gamma},
	\end{equation}
	where $\mathcal{Q} \in \mathbb{R}^{nN_{x} \times nN_{x}}$ and $\mathcal{R} \in \mathbb{R}^{mN_{u} \times mN_{u}}$ are diagonal matrices defined as
	\begin{align*}
		& \mathcal{Q}\!=\!diag([Q,\cdots,Q,Q+K'RK,\cdots,Q+K'RK,P]),\\
		&\mathcal{R}\!=\!diag([R,\cdots,R]).
	\end{align*} These definition allows us to translate \eqref{eq:MPC} into \eqref{eq:QP2} by performing the manipulations summarized in Section~\ref{sec:fromItoE}. Indeed, since the assumption of Theorem~\ref{th:1} holds, by exploiting the data-driven representation of the system, \eqref{eq:state_evol} and \eqref{eq:HMB} can be directly translated into their fully data-driven counterparts. Specifically, we can define $\xi_d$ as in \eqref{eq:xid} and
	\begin{subequations}
		\begin{equation}\label{eq:gammad}
			\gamma_d=X_{1,T} \begin{bmatrix}U_{0,1,T} \\ \hline X_{0, T} \end{bmatrix} ^\dagger \begin{bmatrix} I_{m} \\ 0_{n\times m} \end{bmatrix},
		\end{equation}
	\end{subequations}
	which are equivalent to \eqref{eq:help_matrices} and \eqref{eq:HMB} due to the result of Theorem~\ref{th:1}. We can thus find the data-dependent equivalents of $\Xi$ and $\Gamma$, namely
	\begin{equation*}
		\Xi_d \!=\! \begin{bmatrix}
			\xi_d \\
			\vdots \\
			\xi_d^{N_{u}}\\
			\xi_{K,d}^{N_{u}+1}\\
			\vdots \\
			\xi_{K,d}^{N_{x}}
		\end{bmatrix}\!\!,~~
		\Gamma_d \!=\!\! 
		\begin{bmatrix}
			\gamma_d & \mathbf{0}_{n\times m} & \cdots & \cdots & \mathbf{0}_{n\times m} \\
			\xi_d \gamma_d & \gamma_d & \mathbf{0}_{n\times m} & \cdots & \mathbf{0}_{n\times m} \\
			\vdots & \vdots & \vdots & \ddots & \vdots \\
			\xi^{N_{u}-1}_d \gamma_d& \xi^{N_{u}-2}_d \gamma_d & \cdots & \cdots & \gamma_d
		\end{bmatrix}\!\!,
	\end{equation*}
	with $\xi_{K,d}=\xi_{d}+K\gamma_{d}$, which allow us to recast \eqref{eq:state_evol} in a data-driven fashion as
	\begin{equation}\label{eq:state_evolDD}
		X=\Xi_d x +\underbrace{\begin{bmatrix}\Gamma_d\\ \mathbf{0}_{n(N_{x}-N_{u}-1)\times N_{u}m}\end{bmatrix}}_{\bar{\Gamma}_{d}} U,
	\end{equation}
	and to find the data-driven counterpart of $H$ in \eqref{eq:HMB} as
	\begin{equation}
		H_d = \mathcal{R}+\bar{\Gamma}_d'\mathcal{Q}\bar{\Gamma}_d. \label{eq:Hd}
	\end{equation}
	These quantities ultimately allow us to retrieve the data-based predictive formulation in \eqref{eq:DDQP} through the same manipulations discussed in Section~\ref{sec:fromItoE}, whose equivalence with the model-based counterpart \eqref{eq:QP2} straightforwardly follows from the equivalence of \eqref{eq:state_evolDD} and \eqref{eq:state_evol}, and the one of \eqref{eq:Hd} and \eqref{eq:HMB}. 
	
\end{document}